\documentclass[twocolumn]{IEEEtran}     
\usepackage{blindtext}
\usepackage[nolist,nohyperlinks]{acronym}
\usepackage{hyperref}
\usepackage{svg}
\usepackage{amsthm}

\newtheorem{proposition}{Proposition}

\usepackage{amsmath,epsfig}
\usepackage[T1]{fontenc}
\usepackage[utf8]{inputenc} 
\usepackage{psfrag}
\usepackage{array}
\usepackage{cite}
\usepackage{amsfonts}

\usepackage{amssymb}
\usepackage{color}
\usepackage[dvipsnames]{xcolor}
\usepackage[table]{xcolor}
\usepackage{rotating}
\usepackage{subfigure}
\usepackage{graphicx}
\usepackage{float}
\usepackage{circuitikz}
\usepackage{tikz}
\usepackage{amsbsy} 
\usepackage{tcolorbox}
\usepackage{multirow}

\usepackage{acronym}

\usepackage{algorithm}
\usepackage{algorithmic}

\usepackage{amsthm}

\def\blfootnote{\xdef\@thefnmark{}\@footnotetext}
\makeatother

\newtheorem{remark}{Remark}

\makeatletter
\def\blfootnote{\xdef\@thefnmark{}\@footnotetext}
\makeatother

\input{acronym.def}

\title{{Hybrid Multiport Receivers for Slow \\Fluid Antenna Multiple Access}}

\begin{document}
\begin{acronym}
\acro{PLS}{physical layer security}
\acro{OP}{outage probability}
\acro{SNR}{signal-to-noise ratio}
\acrodefplural{SNR}[SNRs]{signal-to-noise ratios}
\acro{AWGN}{additive white Gaussian noise}
\acro{CSI}{channel state information}
\acro{PDF}{probability density function}
\acrodefplural{PDF}[PDFs]{probability density functions}
\acro{CDF}{cumulative distribution function}
\acro{PDF}{probability density function}
\acrodefplural{CDF}[CDFs]{cumulative distribution functions}
\acro{LRS}{large reflecting surface}
\acro{BS}{base station}
\acro{RF}{radio frequency}
\acro{LOS}{line-of-sight}
\acro{MC}{Monte Carlo}
\acro{RV}{random variable}
\acro{CDF}{cumulative density function}
\acro{SINR}{signal-to-interference-plus-noise ratio}
\acro{SNR}{signal-to-noise ratio}
\acro{FAS}{fluid antenna system}
\acro{MIMO}{multipe-input multiple-output}
\acro{BS}{base station}
\acro{CSI}{channel state information}
\acro{LoS}{line-of-sight}
\acro{RF}{radio frequency}
\acro{SE}{spectral efficiency}
\acro{OP}{outage probability}
\end{acronym}
\author{Jos\'e P. Gonz\'alez-Coma, Jos\'e D.~Vega-S\'anchez,~\IEEEmembership{Senior Member,~IEEE} and F. Javier L\'opez-Mart\'inez,~\IEEEmembership{Senior Member,~IEEE}}


\maketitle

\begin{abstract}
We propose a novel receiver architecture that preserves the performance benefits of multiport selection in fluid‑antenna systems while requiring only a very small number of radio‑frequency (RF) chains. The resulting fluid‑antenna hybrid multiport (FAHM) receiver effectively decouples port selection from signal combining by integrating a low‑complexity analog combining network similar to those used in conventional hybrid multiantenna designs. We develop a stopping criterion to determine the number of selected ports, which limits the performance loss associated with port selection, and then design the hybrid combiner for a given RF-chain budget. The FAHM architecture is evaluated in a multiuser set-up operating under slow fluid-antenna multiple access (FAMA). In this scenario, {a FAHM implementation with only 2 RF chains showcases a performance comparable to a fully-digital conventional multiport scheme with a much larger number of RF chains. Additionally, the proposed receiver architecture attains over 60\% reduction in computational burden }
when integrated with a novel efficient implementation of the state-of-the-art generalized-eigenvector port-selection method.
\end{abstract}

\begin{IEEEkeywords}
Fluid antenna systems, fluid antenna multiple access, hybrid MIMO, multi-user communications, interference.
\end{IEEEkeywords}

\blfootnote{\noindent Manuscript received XX XX, XXXX. The review of this paper was coordinated by XXXX. This work is supported by grant PID2023-149975OB-I00 (COSTUME) funded by MICIU/AEI/10.13039/501100011033 and FEDER/UE, and  by grant PICUD-2025-02 (COMTEUM) funded by the Defense University Center at the Spanish Naval Academy. Also, this work is supported by the Universidad San Francisco de Quito through the Poli-Grants Program under Grant 41989.  This work has been submitted to the IEEE for publication. Copyright may
be transferred without notice, after which this version may no longer be accessible. 
}

\blfootnote{\noindent J.P. Gonz\'alez-Coma is with the Defense University Center at the Spanish Naval Academy, 36920 Marín, Spain. Contact email: $\rm jose.gcoma@cud.uvigo.es$.}
\blfootnote{\noindent 
J.~D.~Vega-S\'anchez is with Colegio de Ciencias e Ingenier\'ias  ``El Polit\'ecnico", Universidad San Francisco de Quito (USFQ), Diego de Robles S/N, Quito (Ecuador) 170157.  Contact email: $\rm dvega@usfq.edu.ec$.
}
\blfootnote{\noindent F.J. L{\'o}pez-Mart{\'i}nez is with Dept. Signal Theory, Networking and Communications, Research Centre for Information and Communication Technologies (CITIC-UGR), University of Granada, 18071, Granada, (Spain). Contact e-mail: $\rm fjlm@ugr.es$.
} 

\blfootnote{The MATLAB code used for simulations and figures in this paper will be available on Github upon
publication of the manuscript.}

\blfootnote{Corresponding author: J.P. Gonz\'alez-Coma.}


\section{Introduction}
\IEEEPARstart{T}{o} support the extreme connectivity envisioned for beyond-fifth generation (B5G) and  sixth-generation (6G) networks, the physical layer must accommodate a large number of users sharing the same time-frequency resources under latency and interference constraints~\cite{Clerckx2024MA6G}. In this context, recent research has continued to revisit spectrum-sharing strategies for massive access, ranging from advanced multiple-access frameworks to extremely large-scale spatial multiplexing. However, these solutions still rely heavily on accurate channel state information (CSI) acquisition and increasingly complex joint transceiver-receiver processing, which limits scalability in dense deployments~\cite{Wang2024XLMIMO}.

Fluid antenna systems (FASs) have emerged as a promising alternative dealing with the radiating aperture as a reconfigurable physical-layer resource rather than as a fixed antenna structure. The original vision of FAS was introduced in~\cite{Wong2021}, while more recent tutorial and survey works have broadened its theoretical, hardware, and networking foundations~\cite{BruceLee2020,Lu2025FluidAntennas,Hong2026Survey}. By exploiting the position reconfigurability of FAS, \ac{FAMA} was proposed as a transmitter-\ac{CSI}-free multiple-access strategy in which each user searches for favorable spatial positions where interference is naturally suppressed~\cite{FAMA}. Later, in the state-of-the-art, several research directions have emerged. For instance, fast-\ac{FAMA} was introduced in~\cite{FastFAMA}, where each user equipment (UE) employs a single fluid antenna and dynamically switches its port on a symbol-by-symbol basis so as to strengthen the desired signal while suppressing interference. Although this strategy offers strong multiplexing capabilities, its practical implementation is challenging because it requires extremely fast and reliable port reconfiguration. To alleviate this limitation, the more practical slow-\ac{FAMA} scheme was proposed in~\cite{SlowFAMA}, where the selected port is updated only when the fading channel changes, so that the chosen port remains fixed over a coherence interval. Building on this foundation, subsequent works refined the theoretical understanding of \ac{FAMA} through outage analysis for the two-user case~\cite{Xu2024Outage}, explored data-driven receiver designs via deep learning for slow-\ac{FAMA}~\cite{waqar2023dlslow}, and extended the framework to coded transmissions over both block-fading and fast-fading channels~\cite{Hong2025BlockCodedFAMA,Hong2025FastCodedFAMA}. More recently, enhanced receiver strategies have also been investigated to further boost the interference-management capability of \ac{FAMA} systems~\cite{Waqar2025TurboFAMA}.

To further improve the multiplexing capability of slow-\ac{FAMA}, recent works have moved from single-port reception to multiport architectures. In particular, the compact ultra-massive antenna array (CUMA) showed that activating multiple ports through analog aggregation can significantly improve open-loop massive connectivity~\cite{CUMA}, and its extension to up to four \ac{RF} chains demonstrated that additional gains can be harvested under practical handset RF-chain constraints~\cite{CUMA2}. {However, the achievable performance of CUMA is highly dependent on a heuristic selection of the desired set of ports, and its performance limits are not well understood. In \cite{Hong25a}, a multiport implementation of FAMA based on a sequential \textit{select-then-combine} operation for the port selection and combination stages was proposed, implementing an interference-rejection digital combining strategy. Very recently,} multiport slow-FAMA receivers based on \textit{joint port selection and combining} were investigated in~\cite{GoLo26}, where substantial \ac{SINR} gains were reported even with a small number of \ac{RF} chains. {Following up this later work, other alternatives for} multi-port selection algorithms have been proposed to improve the performance-complexity trade-off in slow-\ac{FAMA} systems with multiple active ports: for instance,~\cite{hong2025multiport} analyzed exhaustive-search port selection (EPS), incremental port selection (IPS), and decremental port selection (DPS) for receivers employing interference rejection combining (IRC). Here, EPS serves as a performance benchmark by exhaustively evaluating the candidate port subsets, whereas IPS and DPS provide lower-complexity alternatives that retain most of the achievable gain in practical fluid-antenna deployments. {Finally}, the multi-port selection problem has also been revisited from both algorithmic and learning-based perspectives in~\cite{perezadan2026greedy}, where a greedy forward-selection method with swap refinement and a Transformer-based neural-network solution were proposed. These results showed that learning-aided strategies can approach the spectral-efficiency performance of strong greedy baselines while reducing inference complexity.

{Despite this progress, FA multiport schemes face a number of challenges associated with complexity and performance trade-offs. }Existing multiport slow-\ac{FAMA} receivers either tightly couple the number of selected ports to the number of available \ac{RF} chains in digital combining \cite{GoLo26,hong2025multiport}, or rely on {rigid and heuristic analog aggregation mechanisms with reduced numbers of RF chains} \cite{CUMA}. {To the best of our knowledge, the problem of decoupling the selected-port dimension from the hardware constraints that limit the number of available RF chains at user terminals remains unsolved. In this work, we develop a framework that integrates the selection of multiple analog ports with a given budget of RF chains through an analog combining network akin to those classically used in hybrid \ac{MIMO} schemes~\cite{ZhMoKu05}. Specifically, we propose a fluid-antenna hybrid multiport (FAHM) receiver for slow-\ac{FAMA}. The FAHM architecture preserves the performance benefits of multiport selection while requiring only a very small number of \ac{RF} chains -- typically two. The selection of a subset of ports is formally integrated with the signal combiner architecture through a low-complexity analog/digital structure, effectively decoupling the dimensions of multiport selection and signal combining.} The main contributions of this paper are summarized as follows:
\begin{itemize}
\item We propose a novel FAHM receiver architecture for slow-\ac{FAMA} in which $P$ selected fluid-antenna ports are processed through a hybrid analog/digital {combiner with $L$ RF chains. This preserves the benefits of} multiport reception with a reduced number of \ac{RF} chains $L\ll P$.
\item We introduce an effective-port criterion, denoted as $P_{\mathrm{eff}}$, which provides a principled stopping rule to determine the minimum selected-port dimension required before hybrid combining {to achieve a target performance under a fixed RF chain budget.}
\item We develop {a novel} implementation of the {state-of-the-art} Generalized Eigenvector Port (GEPort)~\cite{GoLo26} selection rule that preserves {key benefits of its original formulation} while {significantly} lowering the per-iteration computational cost.
\item We {formally} show that CUMA can be interpreted as a particular structured instance of the proposed FAHM framework. {This gives theoretical support to its inherently heuristic design, and provides a grounded theoretical framework that enables fair performance comparisons.}
\item  {We evaluate the performance of the proposed FAHM receiver under two different combining strategies, namely FAHM-based Digital Combining (DC), denoted as FAHM-DC, and FAHM-GEPort} in terms of \ac{SE} and \ac{OP}, showing consistent gains over conventional slow-\ac{FAMA} and CUMA benchmarks.
\end{itemize}

The remainder of this paper is organized as follows. Section \ref{sm} presents the system model and the proposed hybrid multiport receiver architecture for slow-\ac{FAMA}. Section \ref{RD} develops the receiver design, including the reduced-complexity GEPort implementation, the effective-port criterion, and the interpretation of CUMA as a particular case of the proposed framework. Section \ref{SecIV} introduces the post-combining \ac{SINR} together with the \ac{SE} and \ac{OP} performance metrics, and presents the corresponding numerical results.
Finally, Section \ref{conclusiones} concludes the paper.

\textit{{Notation:}} Scalars are denoted by lowercase letters, while vectors and matrices are represented by boldface lowercase and uppercase letters, respectively. $\mathbf{I}_M$ denotes the identity matrix of dimension $M$. Transpose and Hermitian transpose are denoted by $(\cdot)^T$ and $(\cdot)^H$, respectively. Calligraphic letters denote sets, $|\cdot|$ denotes either set cardinality or complex modulus depending on the argument, and $\mathbb{E}\{\cdot\}$ denotes expectation.

\section{System Model}
\label{sm}
Consider {the \ac{DL} of a} \ac{MIMO} setup where the \ac{BS} is equipped with $M$ antennas and simultaneously serves $U$ users. Each user employs a hybrid fluid-antenna (FA) array, as depicted in Fig.~\ref{fig:sysmodel}, with $N$ ports, from which $P$ ports can be {simultaneously} activated according to {the available \ac{CSI}}. These $P$ ports are connected through an analog network to $L$ \acf{RF} chains. {To implement an open-loop slow-\ac{FAMA}} scheme~\cite{SlowFAMA}, we {assume} the number of \ac{BS} antennas $M$ to be equal to the number of users $U$, i.e., $M=U$. 

\begin{figure*}[!t]
\centering
\psfrag{A}[Bc][Bc][0.8]{$\mathrm{User~1}$}
\psfrag{B}[Bc][Bc][0.8]{$\mathrm{User~}u$}
\psfrag{C}[Bc][Bc][0.8]{$\mathrm{User~2}$}
\psfrag{J}[Bc][Bc][0.8]{$\mathbf{x}_u$}
\psfrag{K}[Bc][Bc][0.7]{$N_1$}
\psfrag{P}[Bc][Bc][0.7]{$N_2$}
\psfrag{L}[Bc][Bc][0.9]{$\mathrm{Base~Station}$}
\psfrag{R}[Bc][Bc][0.7]{$W_1\lambda \times W_2 \lambda~\mathrm{space}$}
\psfrag{H}[Bc][Bc][0.8]{$\mathrm{Port~Selection}$}
\psfrag{Z}[Bc][Bc][0.8]{$\mathbf{S}_u$}
\psfrag{Pel}[Bc][Bc][0.8]{$P$}
\psfrag{FU}[Bc][Bc][0.8]{$\mathbf{F}_u$}
\psfrag{re}[Bc][Bc][0.65]{$\mathrm{RF}$}
\psfrag{ch}[Bc][Bc][0.65]{$\mathrm{Chain}$}
\psfrag{Le}[Bc][Bc][0.8]{$L$}
\psfrag{AC}[Bc][Bc][0.8]{$\mathrm{Analog}$}
\psfrag{AC1}[Bc][Bc][0.8]{$\mathrm{Combining}$}
\psfrag{FAH}[Bc][Bc][0.9]{$\mathrm{\bold{FAHM~receiver}}$}
\psfrag{Dig}[Bc][Bc][0.7]{$\mathrm{\bold{Digital}}$}
\psfrag{Co}[Bc][Bc][0.7]{$\mathrm{\bold{Combining}}$}
\psfrag{Wu}[Bc][Bc][0.8]{$\mathbf{w}_u$}
\psfrag{Zk}[Bc][Bc][0.8]{$\hat{z}_u$}
\psfrag{Nt}[Bc][Bc][0.7]{$N = N_1 \times N_2~  \mathrm{total~ports}$}
\includegraphics[width=0.95\linewidth]{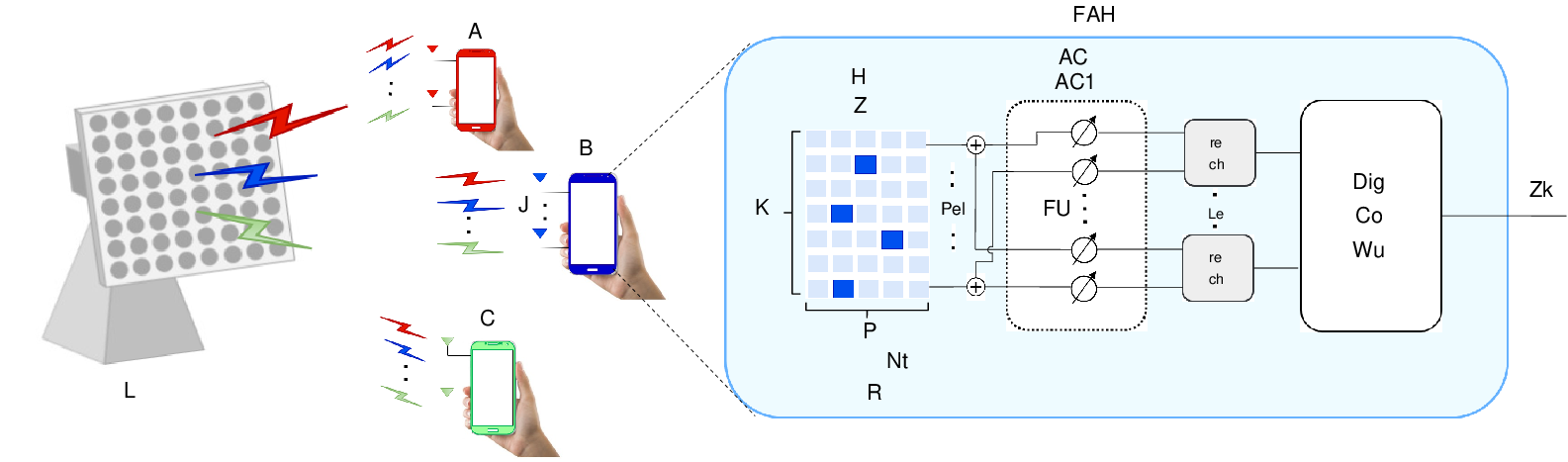}
\caption{System model of the proposed FAHM receiver.}
\label{fig:sysmodel}
\end{figure*}

We denote by $z_u \in \mathbb{C}$ the data symbol intended for user $u$, with $\mathbb{E}\{|z_u|^2\}=\sigma_S^2$. Accordingly, the signal received at the $u$-th user can be expressed as
\begin{equation}
    \mathbf{x}_u = \mathbf{H}_u \mathbf{p}_u z_u + \sum_{j \neq u} \mathbf{H}_u \mathbf{p}_j z_j + \mathbf{n}_u,
    \label{eq1}
\end{equation}
where $\mathbf{H}_u \in \mathbb{C}^{N \times M}$ denotes the channel matrix between the \ac{BS} and user $u$, while $\mathbf{x}_u \in \mathbb{C}^{N \times 1}$ represents the corresponding received signal vector at the FA user. Although this vector representation may suggest a one-dimensional (1D) implementation of the FA array, the two-dimensional (2D) {is also captured through} an appropriate remapping of the 2D FA indices. Due to the low-complexity nature of slow-\ac{FAMA}, the transmitter operates without requiring \ac{CSI}~\cite{SlowFAMA}. Consequently, the precoding vector at the \ac{BS} is simply chosen as $\mathbf{p}_u=\mathbf{e}_u$, where $\mathbf{e}_u$ denotes the $M$-dimensional canonical vector whose entries are all zero except for the $u$-th one. Throughout this work, such a precoding vector is assumed to be known and fixed, {in coherence with the open-loop nature of slow-FAMA}. Finally, $\mathbf{n}_u \in \mathbb{C}^{N \times 1}$ denotes the additive white Gaussian noise vector, whose entries have variance $\sigma^2$. To characterize the received signal in~\eqref{eq1}, we adopt a finite-scatterer geometric channel model. In particular, the $m$-th column of the channel matrix $\mathbf{H}_u \in \mathbb{C}^{N \times M}$ is modeled as the superposition of one \ac{LoS} component and {multiple} scattered components, i.e. \cite{Rappaport2015Wideband}, 
\begin{align}
[\mathbf{H}_u]_m
= &~
\sqrt{\frac{K}{K+1}}\,e^{j\delta_{u,m}}
\mathbf{a}\!\left(\theta_{0}^{(u,m)},\phi_{0}^{(u,m)}\right)
\nonumber \\ &
+
\frac{1}{\sqrt{N_p}}\sqrt{\frac{1}{K+1}}
\sum_{\ell=1}^{N_p}
\kappa_{\ell}^{(u,m)}
\mathbf{a}\!\left(\theta_{\ell}^{(u,m)},\phi_{\ell}^{(u,m)}\right),
\label{eq:Hkm}
\end{align}
where $K$ {can be interpreted as a Rice-like} factor, $\delta_{u,m}$ corresponds to the phase of the \ac{LoS} component, $\kappa_{\ell}^{(u,m)}$ denotes the complex gain associated with the $\ell$-th scattered path, and $N_p$ represents the number of non-\ac{LoS} components. Furthermore, $\mathbf{a}(\theta,\phi)$ stands for the array steering vector, which is given by
\begin{equation}
\mathbf{a}(\theta_{\ell},\phi_{\ell})
=
\left[
1,\,
e^{-j\frac{2\pi}{\lambda}d_{\ell}(2)},\,
\ldots,\,
e^{-j\frac{2\pi}{\lambda}d_{\ell}(N)}
\right]^T,
\end{equation}
where $d_{\ell}(t)$ represents the propagation difference of the $\ell$-th path between a reference port and the $t$-th port, and 
$\lambda$ is the carrier wavelength. {For instance, under} a 1D FA deployment with  $N$ equally spaced ports over a normalized aperture, $d_{\ell}(t)$ can be expressed as
\begin{equation}
d_{\ell}(t)
=
\frac{(t-1)W\lambda}{N-1}\sin\theta_{\ell}\cos\phi_{\ell}, \quad t=1,\dots,N
\end{equation}
where $W$ is the normalized length of the array. {Similarly,} for a 2D FA scheme, if the $t$-th port is located at $(t_1,t_2)$ on an $N_1\times N_2$ grid, then the propagation difference is given by
\begin{equation}
d_{\ell}(t)
=
\frac{(t_1-1)W_1\lambda}{N_1-1}\sin\theta_{\ell}\cos\phi_{\ell}
+
\frac{(t_2-1)W_2\lambda}{N_2-1}\cos\theta_{\ell},
\end{equation}
where $W_1$ and $W_2$ denote the {wavelength-}normalized horizontal and vertical dimensions of the FA surface, respectively. For both FA deployments, $\theta_{\ell}$ and $\phi_{\ell}$ represent the azimuth and elevation angles of arrival of the $\ell$-th path. Moreover, when $\ell=0$, these parameters correspond to the \ac{LoS} component. Likewise, a relevant special case is the rich-scattering regime, corresponding to $K=0$ and $N_p \rightarrow \infty$ in \eqref{eq:Hkm}. Under this assumption, the channel vector associated with the $m$-th \ac{BS} antenna and user $u$ reduces to a correlated circularly symmetric complex Gaussian random vector, i.e.,
\begin{equation}
   [\mathbf{H}_u]_m \sim \mathcal{N}_\mathbb{C}(\mathbf{0},\mathbf{\Sigma}_u),
\end{equation}
where the dependence on the port locations is fully captured by the spatial correlation matrix $\mathbf{\Sigma}_u \in \mathbb{C}^{N \times N}$. In general, the structure of $\mathbf{\Sigma}_u$ is determined by the FA topology, namely, by the relative positions of the ports within the overall aperture\footnote{Specifically, for a 1D FA deployment, Jakes' correlation model is commonly adopted, whereas for a 2D FA deployment, a planar extension based on Clarke's model is considered.}. Assuming equally spaced ports, the entries of $\mathbf{\Sigma}_u$ can be compactly written as
\begin{equation}
    [\mathbf{\Sigma}_u]_{t,v}
    =
    \sigma^2
    J_0\!\left(
    2\pi d_{t,v}
    \right),
\end{equation}
where $J_0(\cdot)$ denotes the zeroth-order Bessel function of the first kind, and $d_{t,v}$ is the normalized separation between the $t$-th and $v$-th ports. For a 1D FA case, the normalized separation is given by
\begin{equation}
    d_{t,v}
    =
    \left|
    \frac{(t-v)W}{N-1}
    \right|.
\end{equation}
For a 2D FA deployment arranged on an $N_1 \times N_2$ grid, let the $t$-th and $v$-th ports be indexed by $(t_1,t_2)$ and $(v_1,v_2)$, respectively. Then, the normalized separation can be expressed as
\begin{equation}
    d_{t,v}
    =
    \sqrt{
    \left(\frac{(t_1-v_1)W_1}{N_1-1}\right)^2
    +
    \left(\frac{(t_2-v_2)W_2}{N_2-1}\right)^2
    }.
\end{equation}

Multiport FA receivers simultaneously select multiple ports, some of which are likely to be closely spaced (i.e., below $\lambda/2$. Hence, the effect of mutual coupling needs to be considered \cite{Nossek10}, so that the equivalent channel matrix becomes
\begin{equation}
\label{eq:coupling}
   \mathbf{H}^{\rm eq}_u = \mathbf{\Gamma}^{\rm rx}_u\mathbf{H}_u \mathbf{\Gamma}^{\rm tx},
\end{equation}
with $\mathbf{\Gamma}^{\rm tx}\in\mathbb{C}^{M\times M}$ and  $\mathbf{\Gamma}^{\rm rx}_u\in\mathbb{C}^{N\times N}$ representing the coupling matrices at the transmitter and receiver ends, respectively. Since the BS antennas have $\lambda/2$ spacing, we have $\mathbf{\Gamma}^{\rm tx}=\mathbf{I}_{M}$. Now, the structure of $\mathbf{\Gamma}^{\rm rx}_u$ heavily depends on the specific implementation\footnote{In the literature, theoretical embodiments of FAs abstract away the physical implementation behind the FA concept. In practice, spatial flexibility can be achieved mechanically \cite{Zhu2024, Shen2024}, or electronically through pixel-based \cite{Zhang2024,Zhang2025} or metasurface-based \cite{Ramirez2025,Liu2025} embodiments.} for the FA. Hence, the coupling matrix is often extracted from full-wave simulators \cite{Ramirez2025}, experimentally, or approximated by modeling each port as a dipole of length $\lambda/2$ and minimal width, with a proper termination impedance (classically $50\Omega$) \cite{Dinis26}.

Under the assumption that the coupling matrix $\mathbf{\Gamma}^{\rm rx}_u$ does not affect $\mathbf{n}_u$ in \eqref{eq1} \cite{CUMA}, all signal processing procedures in FAHM (and the competing benchmark schemes) are applicable only replacing $\mathbf{H}_u$ by $\mathbf{H}^{\rm eq}_u$ from \eqref{eq:coupling}. However, in some circumstances the mutual coupling may induce noise correlation at the receiver side \cite{Nossek10}. This has an impact on the eigenvector structure discussed later Section \ref{subs:FAHM}, and needs a careful examination which is beyond the scope of this paper.

\section{Receiver Design}
\label{RD}
In this section, we detail the receiver-side design of the proposed FAHM architecture. We first characterize the hybrid multiport receiver for a fixed selected-port set and show how the analog and digital stages can be represented through an effective combiner. Next, we present the GEPort-based selection procedure together with its reduced-complexity {implementation} and the effective-port criterion used to determine the dimension {of the set of selected ports}. Finally, we show that the {reference} CUMA receiver \cite{CUMA} can be recovered as a particular structured instance of the proposed FAHM formulation.

\subsection{Hybrid Multiport Receiver }
\label{subs:FAHM}
 On the receiver side, each FA-equipped user determines which $P$ ports to activate, as well as the analog network weights and the digital combining vector, according to its channel conditions and the level of interference. The first stage is represented by the port-selection matrix $\mathbf{S}_u \in \mathcal{B}$, where the admissible set is defined as $\mathcal{B} := \left\{ \mathbf{Z} \in \{0,1\}^{N \times P} : \|\mathbf{Z}\|_{0,\infty} \leq 1 \right\}$. Furthermore, the signals collected from the selected $P$ ports can be coherently combined through the analog combiner $\mathbf{F}_u \in \mathcal{F}$, with $\mathcal{F}:=\left\{ \mathbf{Z}\in\mathbb{C}^{P\times L} : |\mathbf{Z}_{i,j}| = 1,\; \forall i,j \right\}$, together with the digital combining vector $\mathbf{w}_u \in \mathbb{C}^{L}$, which satisfies $\|\mathbf{w}_u\|_2 = 1$ for all $u$. Under this receiver processing, the detected symbol at user $u$ can be expressed as
\begin{equation}
    \hat{z}_u = \mathbf{w}_u^H \mathbf{F}_u^H \mathbf{S}_u^T \mathbf{x}_u,
\end{equation}
where $\mathbf{x}_u$ is given in \eqref{eq1}. Accordingly, the \ac{SINR} of user $u$ reads as
\begin{equation}
    \text{SINR}_u = \tfrac{|\mathbf{w}_u^H \mathbf{F}_u^H\mathbf{S}_u^T\mathbf{H}_u\mathbf{p}_u|^2}{\sum_{j \neq u} |\mathbf{w}_u^H \mathbf{F}_u^H\mathbf{S}_u^T\mathbf{H}_u\mathbf{p}_j|^2 + \|\mathbf{F}_u \mathbf{w}_u\|_2^2\frac{1}{\mathrm{SNR}}},
    \label{eq:SINR}
\end{equation}
with $\mathrm{SNR}=\sigma_S^2/\sigma^2$ being the transmit SNR. Notice that the joint design of $\mathbf{S}_u$, $\mathbf{F}_u$, and $\mathbf{w}_u$ that optimizes the SINR is challenging. Therefore, we first define a hybrid analog--digital combiner through the complex vector $\mathbf{t}_u\in \mathbb{C}^{P}$, 
\begin{equation}
\label{eqtu}
    \mathbf{t}_u = \mathbf{F}_u \mathbf{w}_u .
\end{equation}
Once $\mathbf{t}_u$ is determined, it {is subsequently} decomposed into $\mathbf{F}_u$ and $\mathbf{w}_u$, while preserving the receiver structure. This reduces the problem at hand to the design of the selection matrix $\mathbf{S}_u$ and an equivalent combining vector $\mathbf{t}_u$. Still, the optimization problem is difficult due to the presence of the variables in the desired signal and the interference terms. {One alternative is to perform a suboptimal select-then-combine strategy, where the number of selected ports is first made according to some criterion (e.g., those maximizing the per-port SINR, as in the DC strategy), and then the combining vector is designed accordingly. A joint select-and-combine strategy was shown to provide nearly-optimal performance under a fully-digital combining architecture \cite{GoLo26}. In this case, adapting the problem formulation to the FAHM case, for} a given selection matrix $\mathbf{S}_u$ the equivalent combiner $\mathbf{t}_u$ is obtained as the dominant generalized eigenvector (GEV) of the matrix pair $(\tilde{\mathbf{A}}_u,\tilde{\mathbf{B}}_u)$ \cite{GoLo26}, where
\begin{equation}
\tilde{\mathbf{A}}_u=\mathbf{S}_u^T\mathbf{A}_u\mathbf{S}_u,
\qquad
\tilde{\mathbf{B}}_u=\mathbf{S}_u^T\mathbf{B}_u\mathbf{S}_u,
\label{eq:AtildeBtilde_u}
\end{equation}
with
\begin{equation}
\mathbf{A}_u=\mathbf{H}_u\mathbf{p}_u\mathbf{p}_u^H\mathbf{H}_u^H,
\qquad
\mathbf{B}_u=\sum_{j\neq u}\mathbf{H}_u\mathbf{p}_j\mathbf{p}_j^H\mathbf{H}_u^H+\frac{\mathbf{I}_N}{\mathrm{SNR}}.
\label{eq:AuBu_u}
\end{equation}
Hence, the receiver design over the selected-port subspace can be carried out directly in terms of $\mathbf{t}_u$. Once this dominant GEV is computed, a feasible hybrid realization is obtained by decomposing $\mathbf{t}_u$ into the analog combiner $\mathbf{F}_u$ and the digital vector $\mathbf{w}_u$. This reformulation is very relevant in the considered slow-FAMA setting, where each user detects a single data stream. In such a scenario, the RF/baseband codesign result in \cite{ZhMoKu05} implies that the same optimal linear combining gain as in the corresponding digital implementation can be attained with \textit{only two RF chains} over a fixed selected-port subspace. Therefore, in the proposed FAHM receiver, it is sufficient to set $L=2$ in $\mathbf{F}_u$ in order to realize the effective optimal combiner $\mathbf{t}_u$ over the selected ports, while preserving the hybrid receiver structure. Since FAMA operates in an open-loop fashion, each user can independently design its corresponding $\mathbf{S}_u$ and $\mathbf{t}_u$. 

\subsection{Complexity Reduction and Effective-Port Criterion}
The design of the port-selection matrix $\mathbf{S}_u$ in the proposed hybrid receiver is based on the GEPort method introduced in \cite{GoLo26}. This strategy relies on backward elimination, sequentially discarding the least relevant ports according to the induced performance degradation. Therefore, in the FAHM receiver, GEPort is used to determine the selected-port subset that will subsequently feed the hybrid analog--digital combiner. Specifically, following Algorithm~1 in \cite{GoLo26}, the selection procedure iteratively removes the least relevant port from the current candidate set until the desired selected-port dimension is attained. Although this approach provides strong performance, its original realization requires repeated generalized eigendecompositions on progressively smaller problems, leading to a significant computational cost. This burden is considerably alleviated in the proposed FAHM implementation, while preserving the same GEPort decision rule. Hence, we adopt a reduced-complexity implementation of GEPort, proposed as follows:
\begin{proposition}
The original GEPort implementation in \cite{GoLo26} requires repeatedly solving a generalized eigenvalue problem, which leads to a  {worst-case} per-iteration complexity of $\mathcal{O}(N^3)$. In the proposed FAHM realization, this cost can be reduced to $\mathcal{O}(N^2)$ per iteration by exploiting the rank-1 structure of the useful-signal matrix and recursively updating the inverse of the interference-plus-noise matrix. The GEPort decision rule itself remains unchanged; only its implementation is accelerated. 
\end{proposition}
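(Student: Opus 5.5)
The key observation is that $\mathbf{A}_u=\mathbf{h}_u\mathbf{h}_u^H$ has rank one, where $\mathbf{h}_u:=\mathbf{H}_u\mathbf{p}_u$. For any candidate set $\mathcal{S}$ with selection matrix $\mathbf{S}$, the pair $(\tilde{\mathbf{A}}_u,\tilde{\mathbf{B}}_u)$ in \eqref{eq:AtildeBtilde_u} then has a closed-form dominant generalized eigenpair:
\begin{equation*}
\lambda_{\max}(\mathcal{S})=\tilde{\mathbf{h}}_u^H\tilde{\mathbf{B}}_u^{-1}\tilde{\mathbf{h}}_u,
\qquad
\mathbf{t}_u\propto\tilde{\mathbf{B}}_u^{-1}\tilde{\mathbf{h}}_u,
\end{equation*}
with $\tilde{\mathbf{h}}_u=\mathbf{S}^T\mathbf{h}_u$. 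The matrix $\tilde{\mathbf{B}}_u$ is invertible because of the $\mathbf{I}_N/\mathrm{SNR}$ term. So the quantity GEPort compares when deciding which port to discard, namely the maximal SINR after removing port $k$, becomes a quadratic form in the inverse of a principal submatrix of $\mathbf{B}_u$. No generalized eigendecomposition is needed, and the decision rule is literally the same: the same scalar is computed by a different formula. I would first state this reformulation, justifying it with the Rayleigh-quotient argument that a rank-one $\tilde{\mathbf{A}}_u$ has a single nonzero generalized eigenvalue.

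Next, I would maintain $\mathbf{C}=\tilde{\mathbf{B}}_u^{-1}$ and $\mathbf{g}=\mathbf{C}\tilde{\mathbf{h}}_u$ for the current set of size $n\le N$. Removing port $k$ corresponds to deleting row and column $k$ of $\tilde{\mathbf{B}}_u$. By the block-inverse (Schur complement) identity, the inverse of the reduced matrix is
\begin{equation*}
\mathbf{C}_{-k}=\mathbf{C}_{\bar k\bar k}-\frac{\mathbf{c}_{\bar k k}\mathbf{c}_{\bar k k}^H}{c_{kk}},
\end{equation*}
where $\bar k$ denotes all indices except $k$. Substituting this into the quadratic form gives the candidate metric in closed form:
\begin{equation*}
\lambda_{\max}(\mathcal{S}\setminus\{k\})=\lambda_{\max}(\mathcal{S})-\frac{|g_k|^2}{c_{kk}}.
\end{equation*}
All $n$ candidate metrics therefore cost $\mathcal{O}(n)$ once $\mathbf{g}$ and $\operatorname{diag}(\mathbf{C})$ are available. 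Computing $\mathbf{g}$ is a matrix–vector product, costing $\mathcal{O}(n^2)$. After the least relevant port $k^\star$ is selected, updating $\mathbf{C}$ with the same rank-one downdate costs $\mathcal{O}(n^2)$, and $\mathbf{g}$ can also be downdated at this cost.

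The per-iteration cost is thus $\mathcal{O}(n^2)\le\mathcal{O}(N^2)$. The original implementation instead solves $n$ generalized eigenproblems or equivalent factorizations of size up to $N$ at each step, which is at least $\mathcal{O}(N^3)$ per iteration. The one-time inversion of $\mathbf{B}_u$ at initialization costs $\mathcal{O}(N^3)$. This can be reduced further, since $\mathbf{B}_u$ is $\mathbf{I}_N/\mathrm{SNR}$ plus a matrix of rank $U-1$, so Woodbury gives the inverse in $\mathcal{O}(N^2U)$. Either way, it is incurred only once and does not affect the per-iteration claim.

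The main obstacle is making sure that the original GEPort relevance measure in \cite{GoLo26} is exactly the post-removal dominant generalized eigenvalue, or a monotone function of it, so that the accelerated rule coincides with Algorithm~1 there. If instead it uses the eigenvector components, I would show that these are proportional to $\mathbf{g}$, and recover the required quantities from the same maintained $\mathbf{C}$ and $\mathbf{g}$. A secondary point is numerical: $c_{kk}>0$ always holds because $\mathbf{C}$ is Hermitian positive definite, so the downdate is well defined at every step.
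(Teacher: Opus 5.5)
Your proposal is correct and is essentially the argument of Appendix~A. The rank-one $\mathbf{A}_u$ gives the dominant GEV in closed form as $\mathbf{B}^{-1}\mathbf{h}$, and your downdate $\mathbf{C}_{\bar k\bar k}-\mathbf{c}_{\bar k k}\mathbf{c}_{\bar k k}^H/c_{kk}$ is exactly the paper's $\tilde{\mathbf{B}}^{-1}=\mathbf{C}-\nu\mathbf{c}\mathbf{c}^T$ (the paper's $\mathbf{C}$ is your $\mathbf{C}_{\bar k\bar k}$, and $\nu=1/c_{kk}$), which the paper derives after permuting the removed index to the first position and writes with $(\cdot)^T$ where $(\cdot)^H$ is meant.

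Your eigenvalue-decrement formula $\lambda_{\max}(\mathcal{S})-|g_k|^2/c_{kk}$ is a correct extra that the paper does not use. The paper treats the original as one generalized eigendecomposition per iteration and simply replaces it by $\mathbf{v}=\mathbf{B}^{-1}\mathbf{h}$, which is your second case. Under your alternative of $n$ eigenproblems per step, the original would cost $\mathcal{O}(N^4)$ per iteration rather than the stated $\mathcal{O}(N^3)$.
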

\begin{proof}
See Appendix \ref{apA}.
\end{proof}
For the original multiport receiver in \cite{GoLo26}, {which provides a fully-digital solution, }the number of selected ports is fixed {and equal to the number of RF chains. For practical reasons, this number should be kept as low as possible to reduce complexity. Conversely, the} proposed FAHM architecture {allows to effectively decouple the dimension of the selected port set and the number of RF chains. Hence, a key design question needs to be solved for a FAHM receiver:} \textit{how many ports should be activated to perform hybrid combining?} {Notably, this allows to overcome one of the key limitations of} GEPort: owing to the aggressive sequential dimensionality reduction, even quasi-optimal early-stage port removal decisions {cause a} substantial SINR degradation in later iterations, since the spatial information, primarily exploited during the initial iterations, is progressively discarded as the matrix dimension is reduced.

{To provide a quantitative mechanism that captures the SINR degradation vs. the number of discarded ports}, we define {the \textit{effective number of selected ports}} $P_{\mathrm{eff}}$ as the minimum number of active ports required to preserve a satisfactory SINR level and, therefore, to retain the dominant contribution of the channel in the combining process. 
Hence, the effective number of selected ports is defined as stated in the following Proposition.
\begin{proposition}
\label{prop2}
Let $\mathbf{v}=[v_1,\ldots,v_N]^T \in \mathbb{C}^{N}$ denote the dominant GEV associated with the current GEPort stage, normalized such that $\|\mathbf{v}\|_2=1$. Then, the effective number of selected ports is given by
\begin{equation}
    P_{\mathrm{eff}}=\frac{1}{\sum_{i=1}^{N}|v_i|^4}.
\end{equation}
\end{proposition}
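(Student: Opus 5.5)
The statement is essentially a definition, so the proof should justify that $1/\sum_i |v_i|^4$ is the right measure of how many ports carry the combining energy of the dominant GEV $\mathbf{v}$. I will do this with the standard participation-ratio (inverse Simpson index) argument. Set $p_i=|v_i|^2$. Because $\|\mathbf{v}\|_2=1$, the $p_i$ are nonnegative and sum to one, so $\{p_i\}$ is a probability distribution over the $N$ ports. It describes how the optimal combiner spreads its energy, which is equivalently the SINR contribution of each port under the GEV solution.

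\textbf{Calibration.} If $\mathbf{v}$ is supported uniformly on a set $\mathcal{S}$ of $k$ ports, so that $|v_i|^2=1/k$ for $i\in\mathcal{S}$ and $0$ otherwise, then
\begin{equation*}
\sum_i |v_i|^4 = k\cdot\frac{1}{k^2}=\frac{1}{k},
\end{equation*}
and therefore $P_{\mathrm{eff}}=k$. The formula returns exactly the number of active ports when the energy is uniformly spread.

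\textbf{Bounds.} Next I will show $1\le P_{\mathrm{eff}}\le \|\mathbf{v}\|_0\le N$.
\begin{itemize}
\item The upper bound follows from Cauchy--Schwarz: $1=\big(\sum_{i\in\mathrm{supp}} p_i\big)^2\le \|\mathbf{v}\|_0\sum_i p_i^2$. Equality holds iff the $p_i$ are uniform on the support.
\item The lower bound follows from $\sum_i p_i^2\le \big(\max_i p_i\big)\sum_i p_i\le 1$. Equality holds iff $\mathbf{v}$ is a canonical vector, i.e., a single port carries all the energy.
\end{itemize}
Hence $P_{\mathrm{eff}}$ interpolates between single-port selection and full $N$-port combining. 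It is also invariant to the arbitrary phase of the GEV, since it depends only on the moduli $|v_i|$.

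\textbf{Link to the SINR-preservation definition.} The paper defines $P_{\mathrm{eff}}$ as the minimum number of ports that retains the dominant channel contribution, and this is the main obstacle: $1/\sum_i p_i^4$-type formulas are not literally a minimum cardinality. I plan to argue through the Rényi entropy of order 2. We have $\log P_{\mathrm{eff}}=H_2(p)$, so $P_{\mathrm{eff}}=e^{H_2(p)}$ is the effective support size of $p$. Truncating $\mathbf{v}$ to its $P$ largest entries keeps a fraction $\sum_{i\le P} p_{(i)}$ of the combining energy. For near-uniform weights, this retained fraction is close to one once $P\gtrsim P_{\mathrm{eff}}$.

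Because the GEV maximizes the Rayleigh quotient $\mathbf{v}^H\mathbf{A}_u\mathbf{v}/\mathbf{v}^H\mathbf{B}_u\mathbf{v}$, discarding entries with small $p_i$ perturbs that quotient only to second order. This justifies, at least heuristically, using $P_{\mathrm{eff}}$ as a stopping rule for GEPort. I will present this last step as motivation rather than a tight bound, and rest the rigorous content on the calibration and bound properties above.
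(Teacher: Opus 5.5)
Your proposal is correct and follows the same route as the paper. Appendix~B likewise forms the normalized per-port energies $\pi_i=|v_i|^2/\sum_j|v_j|^2$ and defines $P_{\mathrm{eff}}$ as their inverse participation ratio $1/\sum_i\pi_i^2$, which reduces to $1/\sum_i|v_i|^4$ under $\|\mathbf{v}\|_2=1$; your calibration example and the bounds $1\le P_{\mathrm{eff}}\le\|\mathbf{v}\|_0$ are valid additions the paper does not include.

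Two minor points. In your last paragraph, $1/\sum_i p_i^4$ should read $1/\sum_i p_i^2$. Also, equating $|v_i|^2$ with a per-port SINR contribution, like the Rayleigh-quotient remark, is interpretive motivation rather than an established equivalence, which is consistent with the statement being essentially a definition.
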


\begin{proof}
See Appendix~\ref{apB}.
\end{proof}
{By virtue of Proposition \ref{prop2}}, in the proposed FAHM-GEPort receiver the selected-port dimension is chosen as
\begin{equation}
\label{Peff}
    P^\star = \left\lceil P_{\mathrm{eff}} \right\rceil,
\end{equation}
where $\lceil \cdot \rceil$ denotes the ceiling operator,
and GEPort is applied until only $P^\star$ ports remain in the selected set. 
\begin{remark}
    The quantity $P_{\mathrm{eff}}$ acts as an inverse concentration measure of the dominant GEV. Small values indicate that the dominant combining gain is concentrated on only a few ports, whereas large values indicate that the useful contribution is spread across a broader portion of the FA aperture. Therefore, $P_{\mathrm{eff}}$ provides a principled criterion to determine the selected-port dimension required by the channel before hybrid combining. In particular, once $P^\star$ has been estimated, the GEPort selection stage retains the $P^\star$ most relevant ports, and the resulting subset is then processed by the hybrid combiner described in the previous subsection.
\end{remark}

\subsection{CUMA as a Special Case of FAHM}

The generality of the proposed FAHM structure also allows existing multiport receivers to be interpreted as particular cases. Specifically, the CUMA architecture can be recovered by imposing a structured analog network and a suitable partition of the selected ports into in-phase and quadrature subsets, as stated in the following proposition.

\begin{proposition}
Let $\mathcal{P}_I$ and $\mathcal{P}_Q$ denote the subsets associated with the in-phase and quadrature components of the channel, respectively. Then, by setting $L=4$ and properly choosing the matrices $\mathbf{S}_u$, $\mathbf{F}_u$, and the vector $\mathbf{w}_u$, the proposed FAHM receiver recovers CUMA \cite{CUMA} as a particular structured instance. In particular, the hybrid output can be expressed as
\begin{equation}
    \mathbf{r}_u=\mathbf{F}_u^H\mathbf{S}_u^T\mathbf{x}_u,
\end{equation}
and the detected symbol is obtained as
\begin{equation}
\hat{z}_u=\mathbf{w}_u^H\mathbf{r}_u.
\end{equation}
with $\mathbf{w}_u^H$ given by \eqref{eqWcuma} in Appendix~\ref{apC}.
\end{proposition}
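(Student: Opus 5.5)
The plan is constructive: recall how CUMA \cite{CUMA} forms its decision statistic, then exhibit explicit $\mathbf{S}_u\in\mathcal{B}$, $\mathbf{F}_u\in\mathcal{F}$ and $\mathbf{w}_u$ that reproduce it. CUMA works with the desired channel $\mathbf{h}_u=\mathbf{H}_u\mathbf{p}_u$. It activates the ports whose real parts share a common sign, collected in $\mathcal{P}_I$, and the ports whose imaginary parts share a common sign, collected in $\mathcal{P}_Q$. It sums the received signals over each subset with unit-modulus (sign-type) analog weights. Finally, it separates the real and imaginary parts of the two aggregated outputs and combines them into a real/imaginary pair that approximates the desired signal. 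I would write this statistic explicitly as a combination of the four real quantities $\Re\{\sum_{i\in\mathcal{P}_I}x_i\}$, $\Im\{\sum_{i\in\mathcal{P}_I}x_i\}$, $\Re\{\sum_{i\in\mathcal{P}_Q}x_i\}$ and $\Im\{\sum_{i\in\mathcal{P}_Q}x_i\}$. These four real quantities are what make $L=4$ natural.

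\textbf{Step 1: selection matrix.} Set $P=|\mathcal{P}_I\cup\mathcal{P}_Q|$ and let $\mathbf{S}_u$ have one column $\mathbf{e}_i$ for each $i\in\mathcal{P}_I\cup\mathcal{P}_Q$. This gives $\mathbf{S}_u\in\mathcal{B}$, and $\mathbf{S}_u^T\mathbf{x}_u$ collects exactly the CUMA-active ports.

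\textbf{Step 2: analog combiner.} Construct $\mathbf{F}_u\in\mathbb{C}^{P\times 4}$ with unit-modulus entries so that each column realises one aggregation. Zero entries are not allowed in $\mathcal{F}$, so a column cannot simply mask out the ports of the other subset. I would get around this with complementary pairs. For the $\mathcal{P}_I$ aggregation, take two columns $\mathbf{f}_1=\mathbf{1}$ and $\mathbf{f}_2$ with entries $+1$ on $\mathcal{P}_I$ and $-1$ elsewhere. Then $\tfrac12(\mathbf{f}_1+\mathbf{f}_2)$ is the indicator of $\mathcal{P}_I$, and the $\mathcal{P}_Q$ indicator is obtained analogously from a second pair of columns. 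If $\mathcal{P}_I\cap\mathcal{P}_Q\neq\emptyset$, the overlapping ports are handled by linearity with the same construction.

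\textbf{Step 3: digital stage and the real/imaginary split.} Write the CUMA output as $\hat z_u=\mathbf{w}_u^H\mathbf{r}_u$ with $\mathbf{r}_u=\mathbf{F}_u^H\mathbf{S}_u^T\mathbf{x}_u$. This is the main obstacle, because extracting $\Re\{\cdot\}$ and $\Im\{\cdot\}$ is not a complex-linear operation. I would first try a widely-linear interpretation: the four RF outputs, split into I/Q branches at baseband, are combined by a real-valued $\mathbf{w}_u$. This is legitimate because each RF chain delivers both I and Q samples, so the separation happens after down-conversion. The alternative is to absorb the $j$ factors into the unit-modulus phases of $\mathbf{F}_u$, using columns multiplied by $e^{\pm j\pi/4}$, so that a complex-linear $\mathbf{w}_u$ of the form \eqref{eqWcuma} reproduces the CUMA combination of the four aggregated signals.

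\textbf{Verification.} Substitute the construction and check that $\mathbf{w}_u^H\mathbf{F}_u^H\mathbf{S}_u^T\mathbf{x}_u$ equals the CUMA statistic term by term. The feasibility constraints also need checking. $\|\mathbf{w}_u\|_2=1$ can be imposed by normalising, since scaling does not change the SINR in \eqref{eq:SINR}. $\mathbf{F}_u\in\mathcal{F}$ and $\mathbf{S}_u\in\mathcal{B}$ hold by construction, which completes the identification of CUMA as a structured instance of FAHM.
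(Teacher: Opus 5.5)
Your Step 3 is where the argument breaks, and neither of your two options closes it. Option (b) cannot work. For any analog weights fixed independently of $\mathbf{x}_u$, including columns rotated by $e^{\pm j\pi/4}$, $\mathbf{F}_u^H\mathbf{S}_u^T\mathbf{x}_u$ is a complex-linear function of $\mathbf{x}_u$, whereas $\Re\{y\}=\tfrac12(y+y^*)$ requires $y^*$. With your complementary-pair $\mathbf{F}_u$, the entries of $\mathbf{r}_u$ are complex sums such as $\mathbf{1}^T\mathbf{S}_u^T\mathbf{x}_u$. They are not the real vector $[\Re\{y_1\},\Im\{y_1\},\Re\{y_2\},\Im\{y_2\}]^T$, with $y_1=\sum_{p\in\mathcal{P}_I}[\mathbf{x}_u]_p$ and $y_2=\sum_{p\in\mathcal{P}_Q}[\mathbf{x}_u]_p$, on which \eqref{eqWcuma} is built to act. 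Applying \eqref{eqWcuma} to your $\mathbf{r}_u$ therefore gives the wrong statistic, and the promised term-by-term verification fails. Option (a) is a widely-linear receiver, $\hat z_u=\mathbf{w}_u^H\mathcal{T}(\mathbf{r}_u)$ with an extra real-linear I/Q map $\mathcal{T}$, which is not the form asserted in the statement.

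The paper resolves this by making the analog weights depend on the received phases. With $[\mathbf{x}_u]_p=\rho_pe^{j\phi_p}$, it sets $[\mathbf{F}_u]_{p,1}=e^{j\phi_p}\cos\phi_p$ and $[\mathbf{F}_u]_{p,2}=e^{j\phi_p}\sin\phi_p$ for $p\in\mathcal{P}_I$ (zero otherwise), and likewise in columns 3 and 4 for $\mathcal{P}_Q$. Then $[\mathbf{F}_u]_{p,1}^*[\mathbf{x}_u]_p=\rho_p\cos\phi_p=\Re\{[\mathbf{x}_u]_p\}$ and $[\mathbf{F}_u]_{p,2}^*[\mathbf{x}_u]_p=\Im\{[\mathbf{x}_u]_p\}$. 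So $\mathbf{r}_u$ is exactly the four real sums, and \eqref{eqWcuma} is the least-squares solution of CUMA's real $4\times 2$ system, mapped back to a complex symbol by $[1\ \ j]$.

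Your concern about $\mathcal{F}$ is well founded: the paper's $\mathbf{F}_u$ has zeros, non-unit moduli, and depends on $\mathbf{x}_u$ itself, so it does not stay inside $\mathcal{F}$. However, the statement has \eqref{eqWcuma} acting directly on $\mathbf{F}_u^H\mathbf{S}_u^T\mathbf{x}_u$, which forces the $\Re/\Im$ extraction into the analog stage. That is the idea your proposal is missing.

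If you want to keep a feasible $\mathbf{F}_u$ instead, you need one observation about the $4\times 2$ matrix $\mathbf{G}$ inside \eqref{eqWcuma}. Its rows are $[\Re\{g_k\},-\Im\{g_k\}]$ and $[\Im\{g_k\},\Re\{g_k\}]$, $k=1,2$, so it is the real representation of $\mathbf{g}=[\sum_{p\in\mathcal{P}_I}[\mathbf{h}_u]_p,\ \sum_{p\in\mathcal{P}_Q}[\mathbf{h}_u]_p]^T$. Hence $\mathbf{G}^T\mathbf{G}=\|\mathbf{g}\|_2^2\mathbf{I}_2$, and the paper's detector collapses to $\hat z_u=(g_1^*y_1+g_2^*y_2)/\|\mathbf{g}\|_2^2$, which is complex-linear in $\mathbf{x}_u$. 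Your columns $\mathbf{1},\mathbf{f}_2,\mathbf{1},\mathbf{f}_4$ with $\mathbf{w}_u\propto[g_1,g_1,g_2,g_2]^T$ then reproduce this same $\hat z_u$ for every $\mathbf{x}_u$, using unit-modulus weights. That is a legitimate (and arguably cleaner) route to the same receiver. It realizes it with a different $\mathbf{r}_u$ and $\mathbf{w}_u$ than those in the statement, though, so present it as such rather than as a verification of \eqref{eqWcuma}.
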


\begin{proof}
See Appendix~\ref{apC}.
\end{proof}
\begin{remark}
The above result highlights that the proposed FAHM formulation is not limited to a single combining strategy, but rather provides a unifying framework that includes conventional multiport architectures as special instances, with CUMA arising as a {particular structured case with $L=4$}. {We note that the digital combining vector $\mathbf{w}_u\in\mathbb{C}^{4\times 1}$ for notational consistence, due to the inability to express the $\Re/\Im$ operations as complex linear operators. In practice, CUMA can be implemented with only 2 RF chains, admitting an extension to 4 RF chains \cite{CUMA2}.} 
\end{remark}

\section{Performance Metrics and Numerical Results}
\label{SecIV}
In this section, we first review the performance metrics used to evaluate the proposed FAHM receiver architecture. Then, we present and discuss the numerical results obtained for the considered underlying system.

\subsection{Performance Metrics}
\label{pm}
To evaluate the proposed FAHM receiver, we consider two complementary user-wise performance metrics, namely, the \ac{SE} and the \ac{OP}. Both metrics are directly determined by the post-combining \ac{SINR} achieved by the hybrid receiver. For the adopted FAHM structure, the SINR of user $u$ is given by \eqref{eq:SINR}. The SE achieved by user $u$ is therefore given by
\begin{equation}
R_u=\log_2\!\left(1+\mathrm{SINR}_u\right).
\label{eq:SE_metric}
\end{equation}
Accordingly, an SE-oriented formulation of the receiver design can be written as
\begin{equation}
\max_{\{\mathbf{S}_u\in\mathcal{B},\,\mathbf{t}_u\}_{u=1}^{U}}
\sum_{u=1}^{U}\log_2\!\left(1+\mathrm{SINR}_u\right).
\label{eq:SE_optimization}
\end{equation}
In addition, let $P_{\text{out},u}(\gamma)$ denote the \ac{OP} of user $u$, where $\gamma$ is the target SE threshold. Then, the corresponding reliability-oriented formulation can be stated as
\begin{align}
\underset{\mathbf S_u\in\mathcal{B},\mathbf t_u}{\text{min}}
P_{\text{out},u}(\gamma), \;\;\text{s.t.} \;\;P_{\text{out},u}(\gamma)=P(\log_2(1+\text{SINR}_u)<\gamma). 
\label{eq:probForm}
\end{align}

\subsection{Numerical Results}
\label{nr}
We now evaluate the performance of the proposed FAHM receiver architecture for multiport FAMA operation. {As previously discussed, in} the hybrid multiport receiver the number of selected active ports is not necessarily equal to the number of RF chains. Specifically, $P$ ports are activated at the FA, whereas only $L$ RF chains are employed after analog combining, typically with $P\gg L$. Unless otherwise stated, and {due to the optimality of using only two RF chains over a fixed selected-port subspace \cite{ZhMoKu05}, } all figures for the proposed FAHM scheme are obtained with $L=2$ RF chains. {Without loss of generality}, all plots are generated by assuming unit average transmit-symbol power, i.e., $\sigma_S^2 = 1$. In the channel generation, the \ac{LoS} components observed at the FA receiver are modeled with a common angle of arrival, namely, $\theta_0=\pi/2$ and $\phi_0=0$.
By contrast, the angles of the scattered paths, $\{\theta_{\ell},\phi_{\ell}\}_{\ell=1}^{N_p}$, are independently generated according to the adopted isotropic angular distribution. For comparison purposes, we benchmark the proposed FAHM receiver against slow-FAMA (single RF chain), CUMA-based reception{\footnote{{CUMA requires a heuristic tuning of its operational parameters, namely the maximum number of selected ports $N_{\max}$ and the threshold parameter $\rho\in[0,1]$. In all subsequent comparisons, we use the values $N_{\max}=N$ and $\rho=0.4$ since they provided the best performance after an exhaustive search over the candidate configurations considered 
in our simulations. Hence, the selected number of ports in CUMA largely exceeds that of $P$ in FAHM.}}} ($L=2$ RF chains), and FAHM-DC architecture {with equal $P$ and $L=2$.} {Unless otherwise stated, $\mathbf{\Gamma}^{\rm rx}_u=\mathbf{I}_N$ is assumed for all compared schemes.}

\begin{figure}[t]
    \centering
\includegraphics[width=\columnwidth]{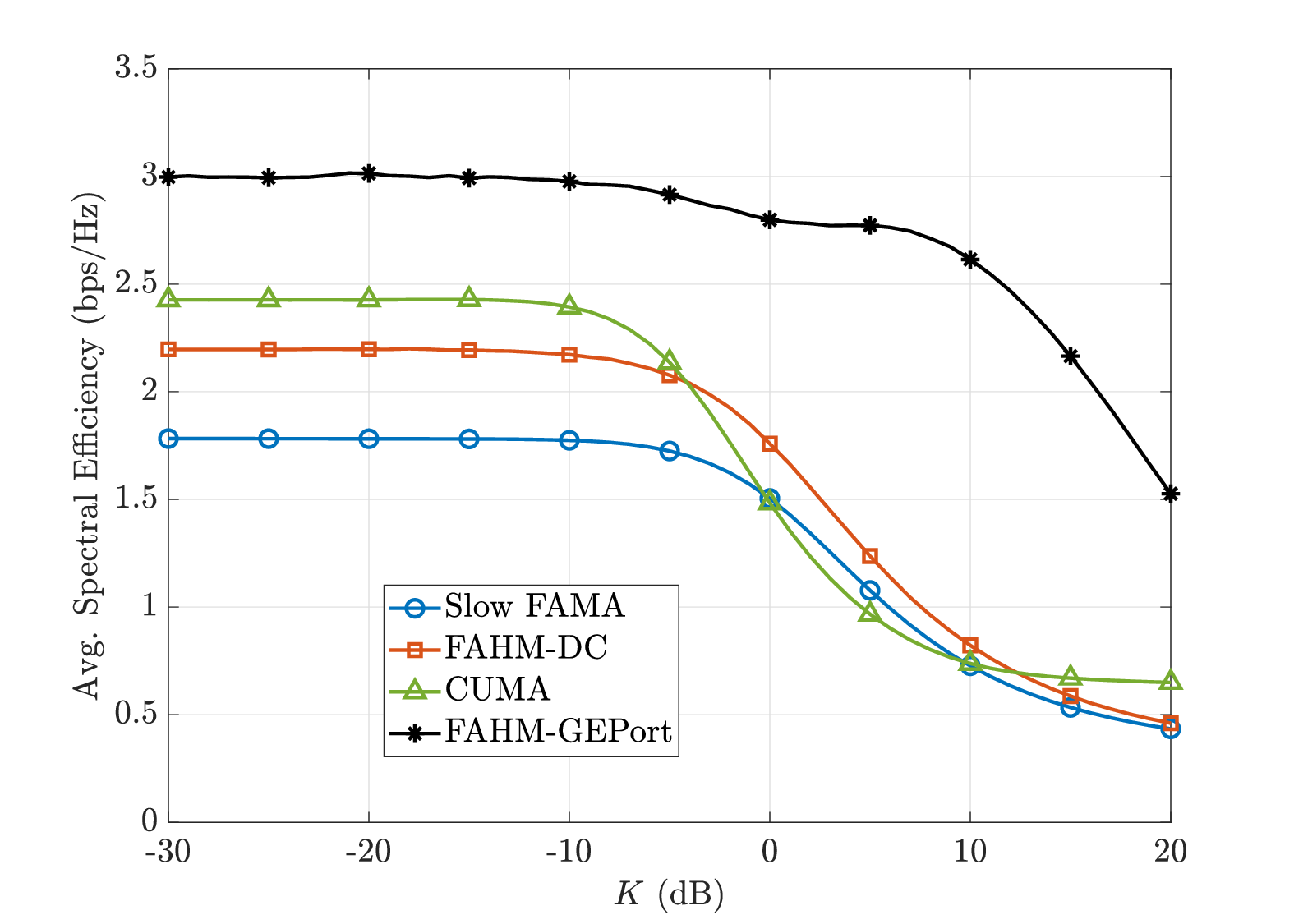}
\caption{Average SE for user $k$ versus the Rice factor $K$ for the different schemes. Results correspond to the 2D FA case with $N_1 \times N_2 = 15 \times 15$ and $W_1 \times W_2 = 4 \times 1$. Also, $M=K=5$, $P=2$, $N_p=30$, and $\mathrm{SNR}=15$ dB.}
    \label{fig2}
\end{figure}

We first evaluate the performance of FAHM for some specific values of $P$, assuming different propagation conditions. In Fig.~\ref{fig2}, we explore the impact of the Rice factor $K$ on the achievable average spectral efficiency (SE) of the considered schemes. The results correspond to the 2D FA case and are obtained for a multiuser setup with $M=U=5$, $P=2$, a $N=15\times 15$ FA grid, and a normalized aperture $W= 4 \times 1$, with $N_p=30$ scattered paths, and $\mathrm{SNR}=15$ dB. As observed in the all curves, GEPort consistently achieves the highest average SE over the entire range of Rice factors under consideration, which highlights the benefit of jointly exploiting port selection and combining. FAHM-DC and CUMA provide intermediate performance, whereas conventional slow FAMA remains the weakest benchmark. Interestingly, the performance gap between FAHM-GEPort and the reference schemes is especially clear in the low- and moderate-$K$ regimes, where the channel still preserves a significant scattered component. By contrast, for very large $K$, all schemes experience a marked performance degradation. In the considered setup, this behavior can be attributed to the increasing dominance of the common LoS component, which reduces angular diversity and makes user separation more challenging under the open-loop slow-FAMA framework. Nevertheless, FAHM-GEPort remains the most robust strategy across all values of $K$.

\begin{figure}[t]
    \centering
\includegraphics[width=\columnwidth]{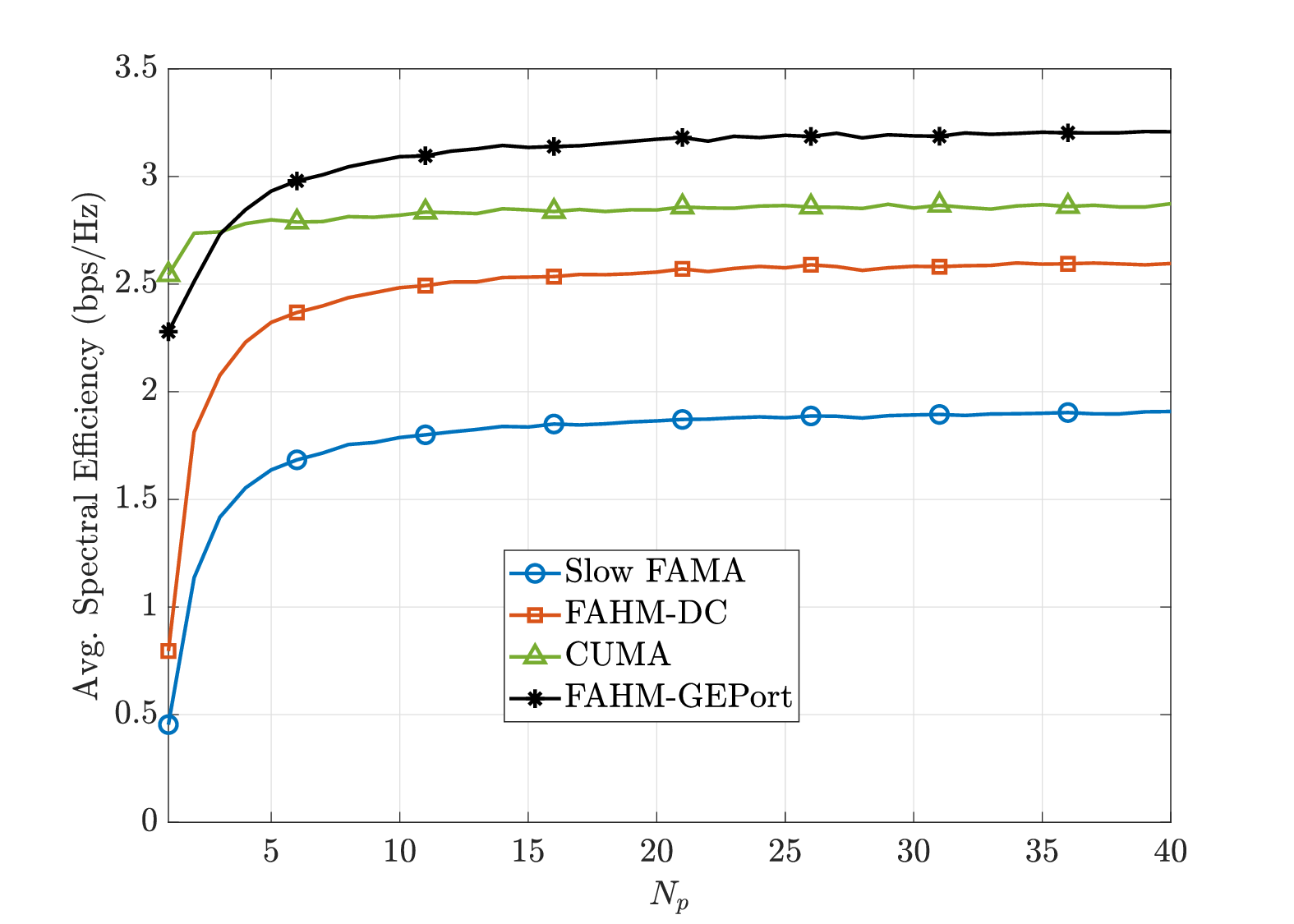}
\caption{Average SE for user $u$ versus the number of scattered paths $N_p$ for the different schemes. Results correspond to the 2D FA case with $N_1 \times N_2 = 15 \times 15$ and {$W_1 \times W_2 = 5 \times 4$}. Also, $M=U=5$, $P=2$, $K=-20$ dB, and $\mathrm{SNR}=10$ dB.}
    \label{fig3}
\end{figure}

\begin{figure}[t]
    \centering
\includegraphics[width=\columnwidth]{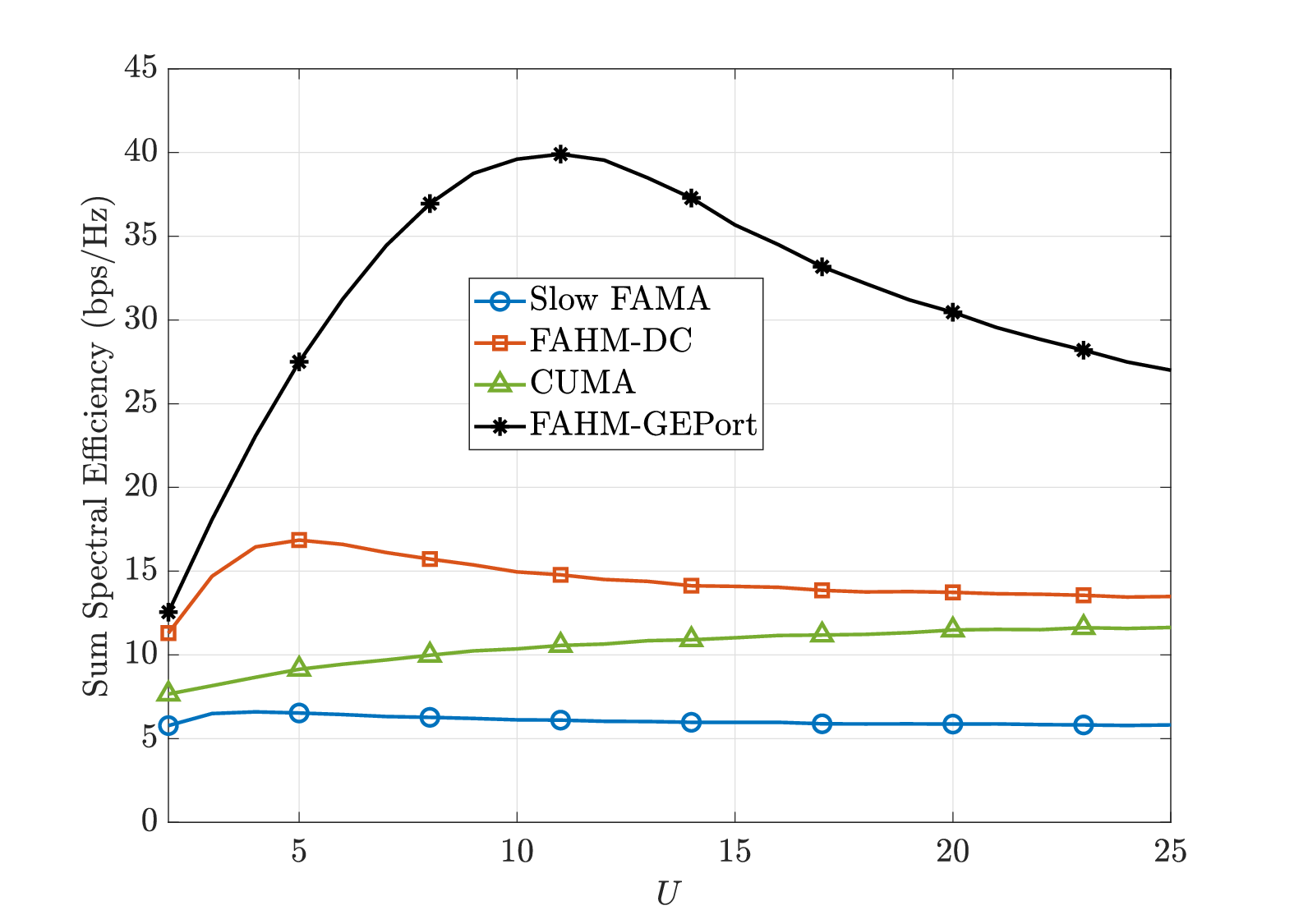}
\caption{{Sum} SE versus the number of BS antennas and users, with $M=U$, for the different schemes. Here, we assume 1D FA case with $N=100$ ports and normalized aperture $W=6$. Also, $P=10$ and $\mathrm{SNR}=5$ dB.}
    \label{fig4}
\end{figure}

\begin{figure}[!t]
    \centering
\includegraphics[width=\columnwidth]{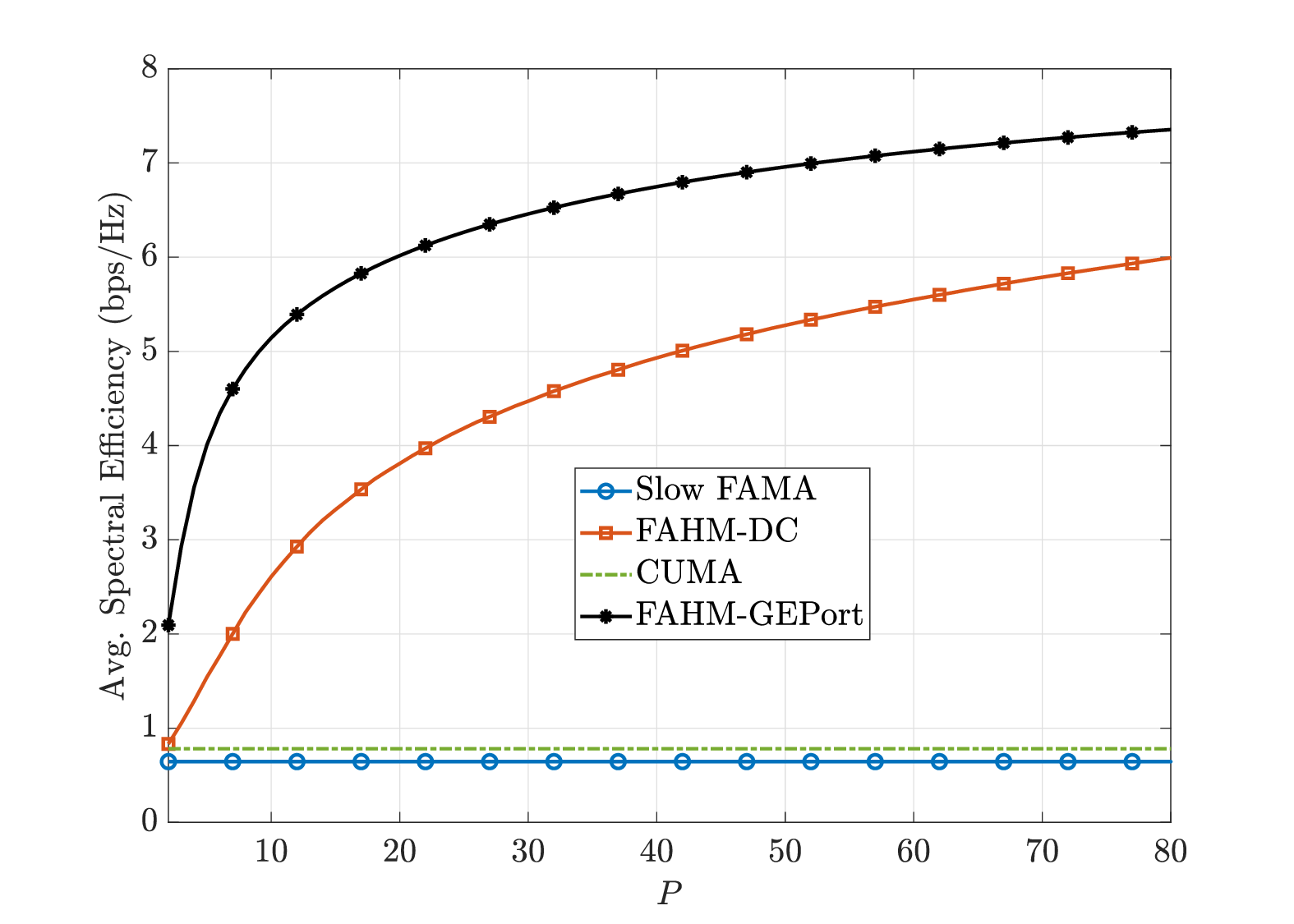}
\caption{Average SE for user $u$ versus the number of selected ports $P$ for the different schemes. Here, we assume a 2D FA case with $N_1 \times N_2 = 15 \times 15$ and normalized aperture {$W_1 \times W_2 = 5 \times 5$}. Also, $M=U=6$, $K=10$ dB, $N_p=80$, and $\mathrm{SNR}=10$ dB.}
    \label{fig5}
\end{figure}

\begin{figure*}[ht!]
\centering
\subfigure[Average SE for user $u$ vs. SNR for $P=30$.]{\includegraphics[width=0.49\textwidth]{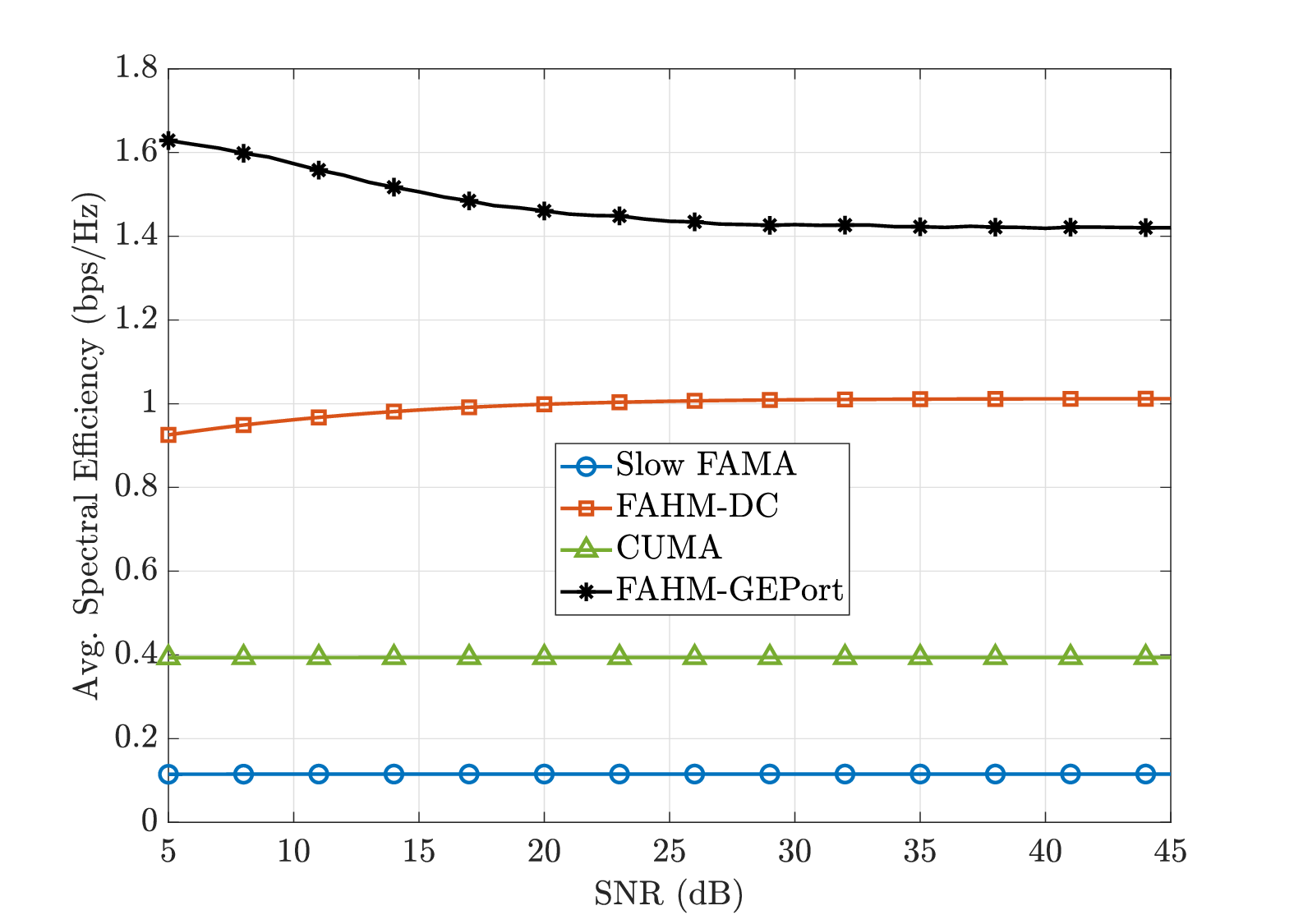}} 
\subfigure[Average SE for user $u$ vs. SNR for $P=60$. ]{\includegraphics[width=0.49\textwidth]{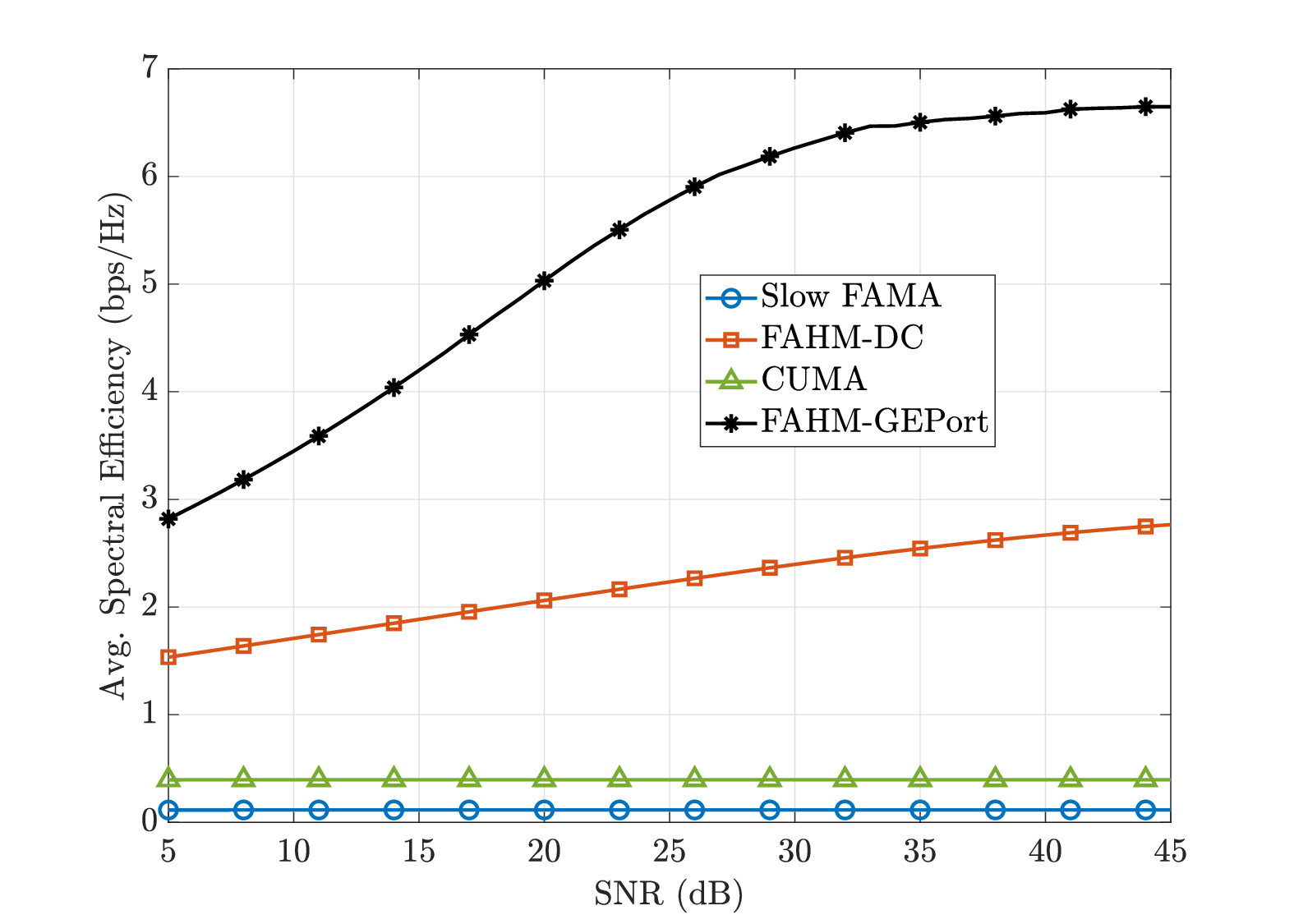}}
\subfigure[Average SE for user $u$ vs. SNR for {$P=P_{\textit{\rm eff}}=85$}.]{\includegraphics[width=0.49\textwidth]{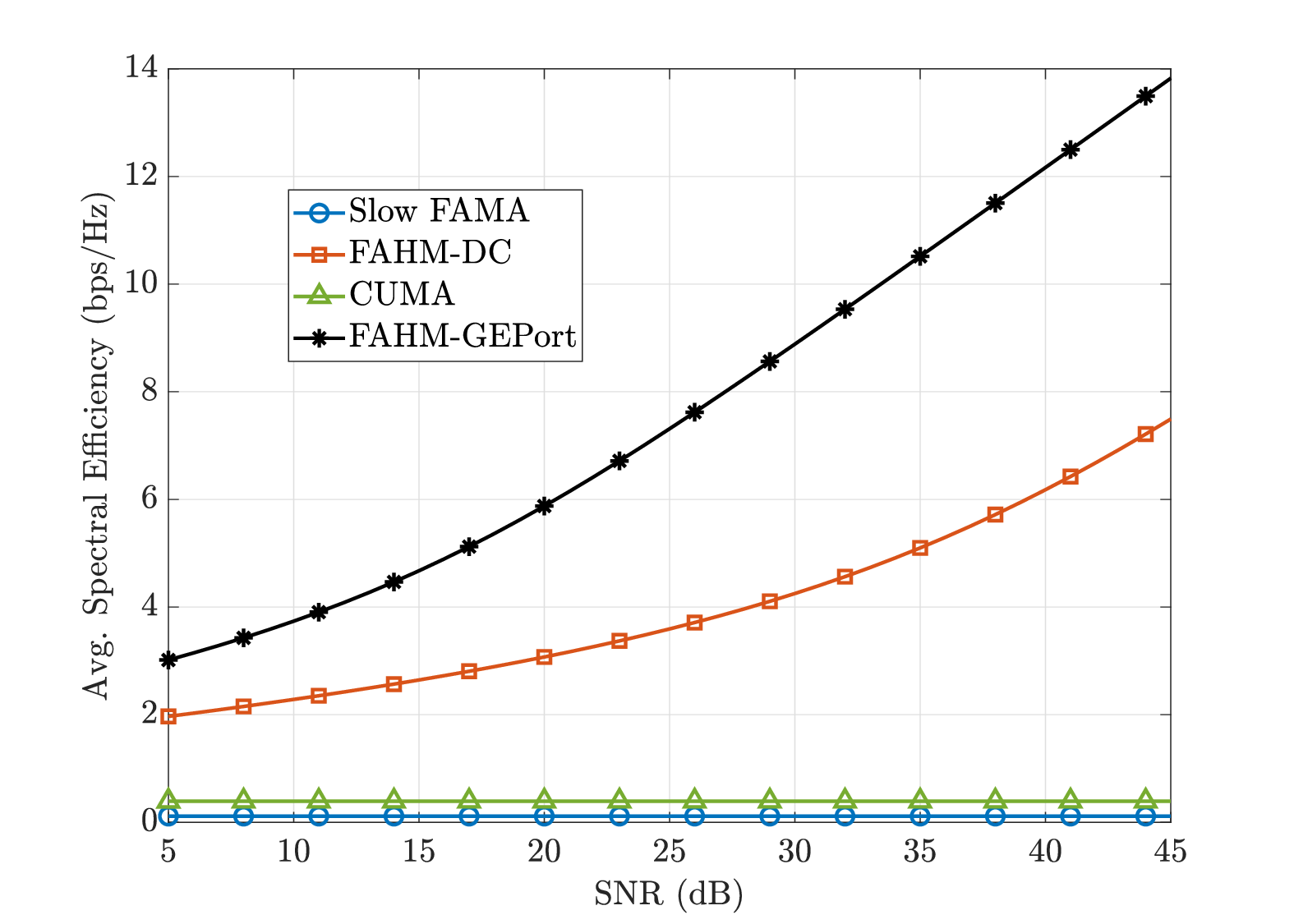}}
\subfigure[Dominant-SINR elbow analysis for $P_{\mathrm{eff}}$.]{\includegraphics[width=0.49\textwidth]{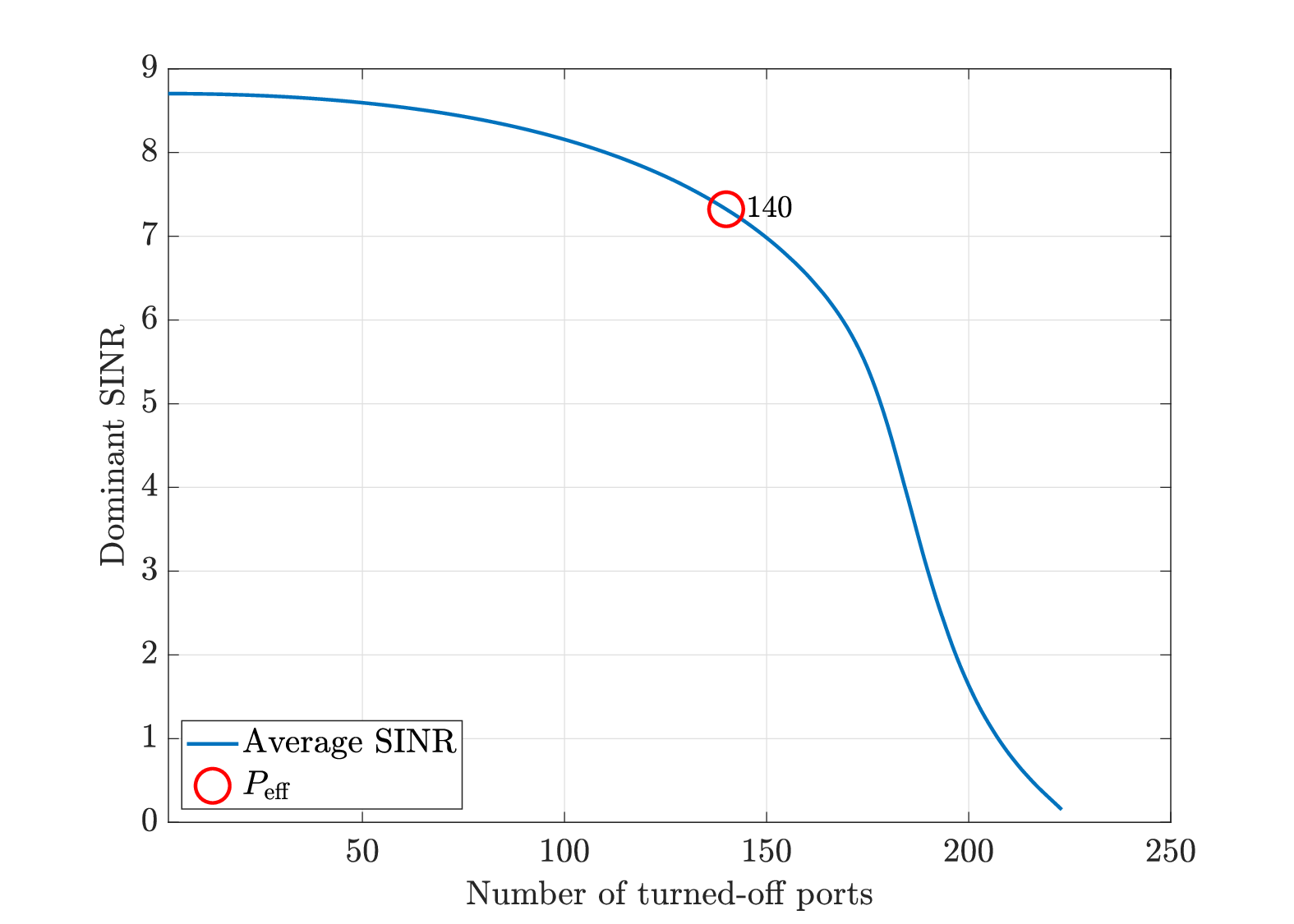}}
\caption{Average SE for user $u$ versus the transmit SNR for the different schemes and different selected-ports, namely, Fig.~\ref{fig6}(a) for
 $P=30$, Fig.~\ref{fig6}(b) for  $P=60$, and Fig.~\ref{fig6}(c) for {$P=85$}. On the other hand, Fig.~\ref{fig6}(d) shows the elbow-based dominant-SINR analysis used to estimate the effective number of ports. Here, we assume a 2D FA case with $N_1\times N_2=15\times 15$, so that $N=225$, and $P=85$ is obtained from the effective-port indicator $P_{\mathrm{eff}}$.}
    \label{fig6}
\end{figure*}

In Fig.~\ref{fig3}, we investigate 
the impact of the number of scattered paths $N_p$ on the average SE of the considered schemes. In the multiuser setup, we set: $M=U=5$, $P=2$, a $N= 15\times 15$ FA grid, and a normalized aperture {$W=5\times 4$}, while the Rice factor is fixed to $K=-20$ dB and the transmit SNR is set to $10$ dB. From all instances, it can be observed that the average SE improves as $N_p$ increases, especially when moving from the extremely sparse-scattering regime to moderate values of $N_p$. This behavior indicates that a richer multipath environment provides additional spatial diversity across the FA ports, which can be more effectively exploited by port-selection and combining strategies. Among all schemes, FAHM-GEPort achieves the highest average SE over the entire range of $N_p$, confirming the benefit of jointly exploiting multiport selection and hybrid combining. CUMA and FAHM-DC provide intermediate performance, whereas conventional slow-FAMA remains the weakest benchmark. In particular, the advantage of FAHM-GEPort is especially noticeable in the low-$N_p$ regime, where the channel offers limited angular richness and an efficient exploitation of the most informative ports becomes more critical. Moreover, although all schemes benefit from increasing $N_p$, the gains gradually saturate for large values of $N_p$.

In Fig.~\ref{fig4}, we study the impact of the number of users $U$ on the {sum} spectral SE achieved by the considered schemes. In this setup, we assume a 1D-FA case with $N=100$ ports, $P=10$ selected ports, a normalized aperture $W=6$, and Rayleigh channel fading, while the transmit SNR is fixed to $5$ dB. As observed from all curves, {the sum SE does not decrease monotonically with \(U\); instead, the different schemes exhibit distinct behaviors as the number of simultaneously served users increases.} For instance, in the considered slow-FAMA setting, increasing \(U\) also increases the number of simultaneously served users, which strengthens the multiuser interference term in the SINR expression. In particular, although a larger number of BS antennas may in principle provide additional spatial degrees of freedom, the open-loop transmission strategy considered here does not exploit transmit-side CSI, and therefore {the achievable gain is governed by the trade-off between the additional receiver-side diversity and the increased multiuser interference.} Among all schemes, FAHM-GEPort achieves the highest SE over the entire range of \(M\), showing a clear advantage in the low- and moderate-\(U\) regimes. {For FAHM-GEPort, the sum SE initially increases with \(U\), reaching its maximum at a moderate number of users (comparable to $P$), and then progressively decreases as the interference load becomes dominant. A similar but less pronounced behavior is observed for FAHM-DC, whereas CUMA and slow FAMA saturate much earlier, and then become almost insensitive to \(U\).} {In the high-\(U\) regime, the performance gain of FAHM-GEPort is gradually reduced,} which indicates that the benefit of port selection and combining becomes increasingly limited when the interference load scales together with the number of users.


\begin{figure}[!t]
    \centering
\includegraphics[width=\columnwidth]{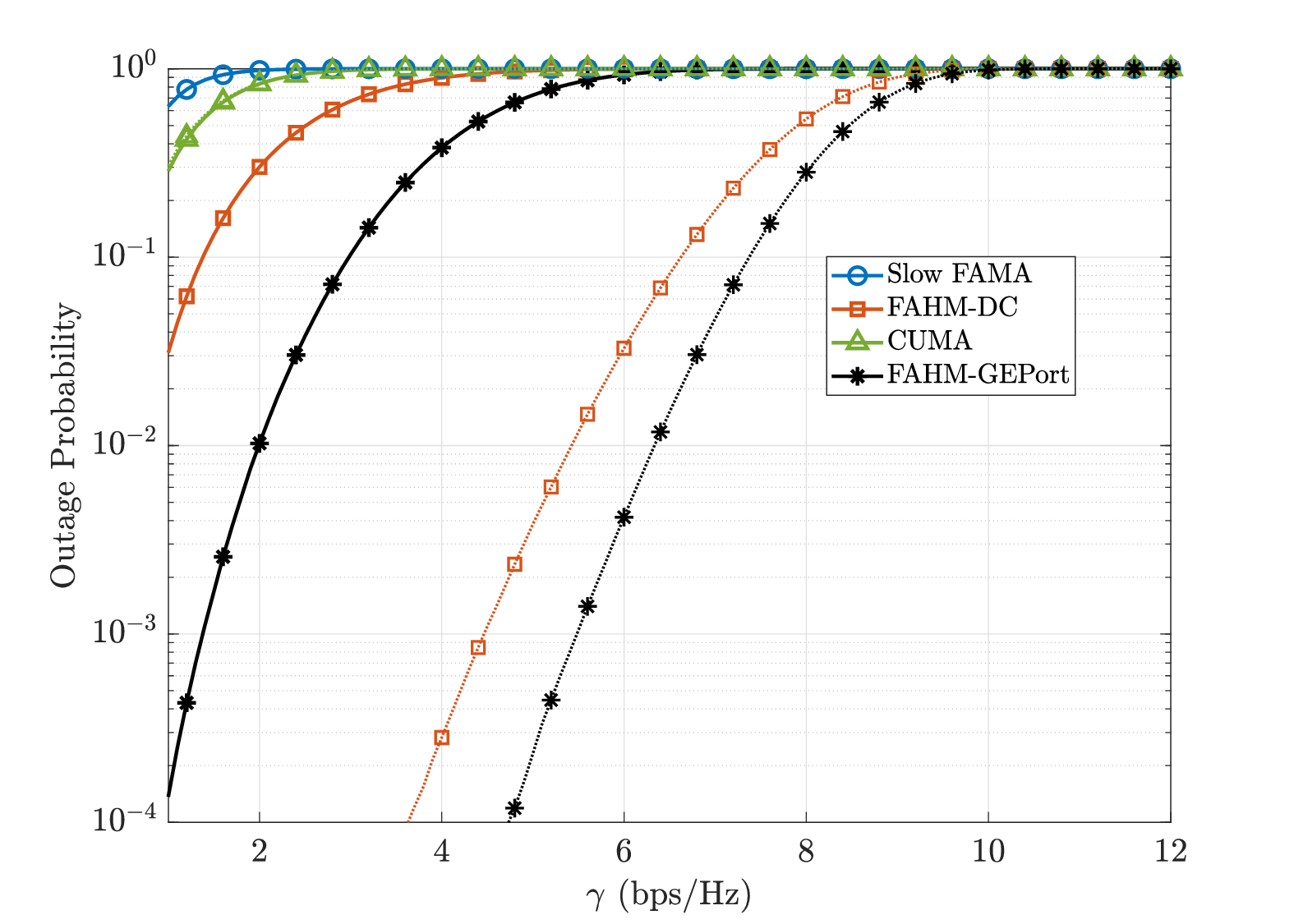}
\caption{Outage probability for user $u$ vs. $\gamma$ for the different schemes. Here, we assume a 2D FA case with $N = 5 \times 5$ and normalized aperture {$W= 2 \times 1$.} Also, $M=U=6$, $K=-10$ dB, $N_p=5$, and $\mathrm{SNR}=15$ dB. Solid and dotted lines correspond to $P=4$ and $P=P_{\mathrm{eff}}=14$, respectively.}
    \label{fig8}
\end{figure}

In Fig.~\ref{fig5}, we examine the impact of the number of selected ports, denoted by $P$, on the average SE achieved by the considered schemes. In this setup, we consider a multiuser 2D FA scenario with $M=U=6$, $N = 15 \times 15$, and normalized aperture {$W = 5 \times 5$}, while the Rice factor is fixed to $K=10$ dB, the number of scattered paths is set to $N_p=80$, and the transmit $\mathrm{SNR}=10$ dB. As observed from the plots, the average SE increases with the number of selected ports for the FAHM-GEPort and FAHM-DC schemes, {providing remarkable improvements over the slow FAMA and CUMA benchmarks.} In particular, FAHM-GEPort achieves the highest SE over the entire range of selected ports, clearly outperforming the reference schemes. {Interestingly,} the performance improvement is not uniform over the whole range of $P$. Most of the SE gain is obtained in the low- and moderate-$P$ regime, where increasing the number of selected ports from very small values produces a sharp improvement, especially for FAHM-GEPort. By contrast, for large $P$, the curves progressively flatten, revealing a clear effect of diminishing-returns. This saturation indicates that once a sufficiently rich subset of ports has already been selected, adding more ports provides only marginal additional diversity or interference mitigation. {With this in mind, we further proceed to formally analyze the design of $P$ using the concept of effective number of ports $P_{\rm eff}$, and to showcase its usefulness for system design in FAHM}.

\begin{figure}[!t]
    \centering
\includegraphics[width=\columnwidth]{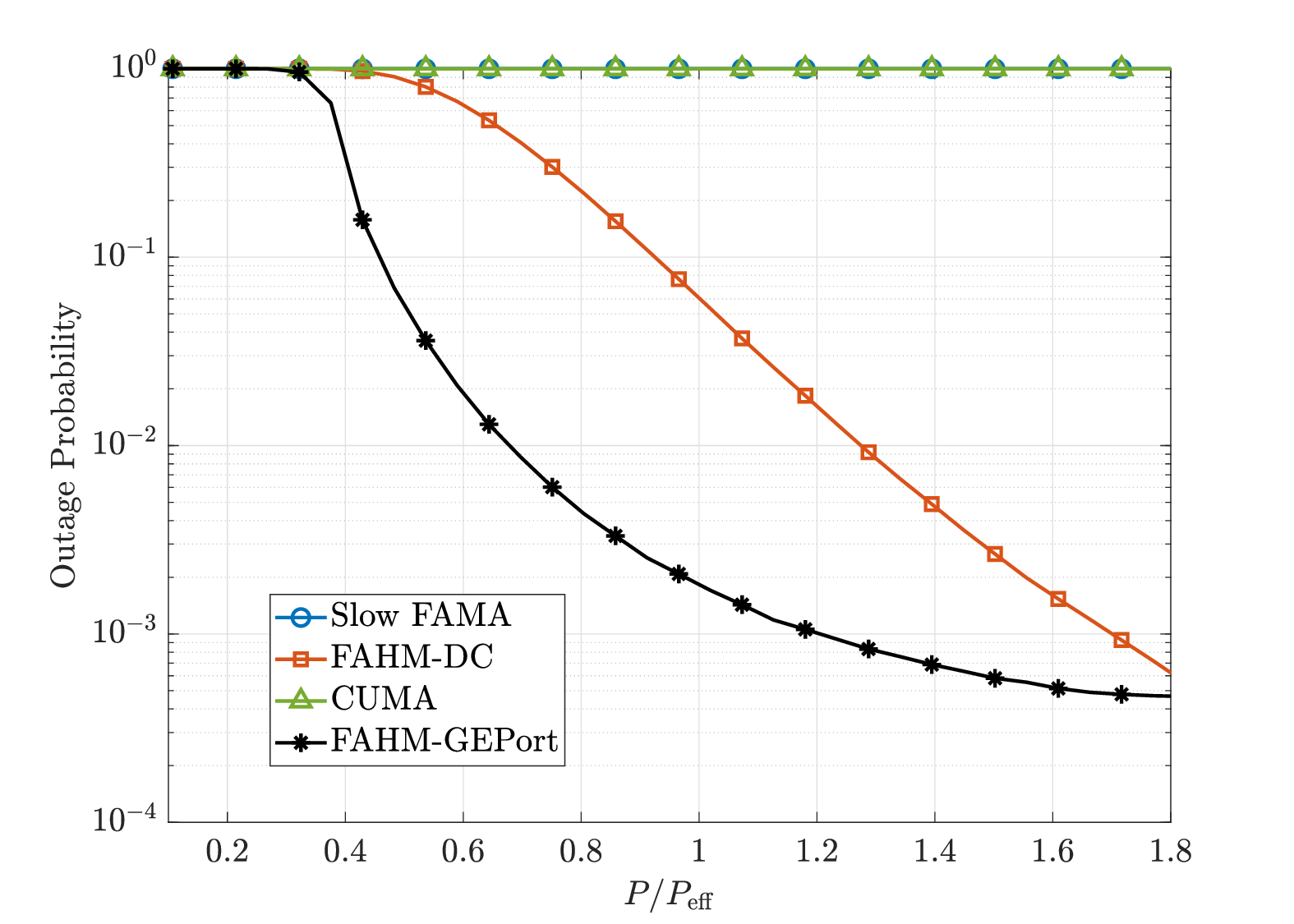}
\caption{Outage probability for user $u$ versus the normalized selected-port ratio $P/P_{\mathrm{eff}}$ for the different schemes. Here, we assume a 2D FA case with $N= 7 \times 5$ and normalized aperture {$W= 2 \times 1$}. Also, $M=U=8$, $K=-10$ dB, $N_p=20$, $\mathrm{SNR}=15$ dB, and $\gamma=7$ bps/Hz.}
    \label{fig9}
\end{figure}

In Fig.~\ref{fig6}, we evaluate the average SE per user achieved by the considered schemes as a function of the transmit SNR.  Specifically, Figs.~\ref{fig6}(a)--\ref{fig6}(c) correspond to different values of the number of selected ports $P$, whereas Fig.~\ref{fig6}(d) shows the elbow-based analysis used to identify the effective number of ports, denoted by $P_{\mathrm{eff}}$. In this setup, we consider a multiuser 2D FA configuration with $N=15\times 15=225$, and $M=U=70$, under Rayleigh fading channel. The normalized aperture is set to {$W = 3 \times 3$}, and the proposed FAHM-GEPort receiver is evaluated for three representative values of $P$. In particular, in Fig.~\ref{fig6}(c), the choice  {$P=85$} is not arbitrary, but is obtained from the effective-port indicator $P_{\mathrm{eff}}$, whose analytical expression is given by \eqref{Peff}. The curves in Figs.~\ref{fig6}(a)--\ref{fig6}(c) reveal a clear dependence of the achievable SE on the number of selected ports. When $P=30$ in Fig.~\ref{fig6}(a), the FAHM-GEPort scheme still outperforms the other benchmarks over the whole SNR range, but its SE remains almost flat and even shows a slight loss at high SNR, which indicates that the receiver is operating in a strongly interference-limited regime. In other words, the number of selected ports is too small to fully exploit the spatial diversity of the FA surface, and therefore increasing the transmit power cannot be effectively translated into SE gains. A similar behavior is observed for $P=60$ in Fig.~\ref{fig6}(b), where the proposed scheme improves with SNR but still tends to saturate at moderate-to-high SNR values. This suggests that, although selecting more ports alleviates the bottleneck, the receiver remains slightly under-dimensioned relative to the effective dimensionality of the channel. A markedly different trend appears when {$P=85$}, as shown in Fig.~\ref{fig6}(c). In this case, the FAHM-GEPort curve grows monotonically over the entire SNR range and no clear SE floor is observed. This is a particularly important result, since it indicates that once the number of selected ports reaches the effective number of ports required by the channel, the hybrid receiver is able to exploit the available spatial degrees of freedom to improve the SE. This interpretation is supported by the elbow analysis shown in Fig.~\ref{fig6}(d). The dominant-SINR curve, plotted as a function of the number of turned-off ports, presents a clear elbow around {$140$} switched-off ports. Since the surface of the FA contains $N=225$ ports, this yields { $P_{\mathrm{eff}} = N - 140 =  85$}. Hence, the elbow criterion indicates that approximately {$85$} ports should remain active to preserve the dominant contribution of the channel. Figs.~\ref{fig6}(a)--\ref{fig6}(c) validate this assertion: only when {$P=P_{\mathrm{eff}}=85$} does the proposed receiver avoid premature saturation and exhibit the desired SNR-dependent growth. This provides an attractive physical interpretation of $P_{\mathrm{eff}}$: it acts as a practical design rule to select the minimum number of ports required to unlock the full benefit of hybrid multiport reception.

\begin{table}[!t]
\centering
\caption{Computational Complexity and Average Elapsed Time}
\label{tab:complexity}
\begin{tabular}{|l|c|c|}
\hline
\textbf{Method} & \textbf{Complexity} & \textbf{Time (ms)} \\
\hline
Slow-FAMA~\cite{SlowFAMA} & $\mathcal{O}(NU)$ & 0.36\\
DC~\cite{GoLo26} & $\mathcal{O}(NU + P^3)$ & 2.15 \\
CUMA~\cite{CUMA} & $\mathcal{O}(N)$ & 2.02 \\
GEPort~\cite{GoLo26} & $\mathcal{O}((N-P)N^3)$ & 309.70 \\
GEPort + Rank-1 (prop.) & $\mathcal{O}((N-P)N^2)$ & 119.16 \\
\hline
\end{tabular}
\label{tabla1}
\end{table}

In Fig.~\ref{fig8}, we assess the OP  by the considered schemes as a function of $\gamma$. Here, we consider a 2D FA configuration with $M=U=6$, $N = 5 \times 5$, and normalized aperture {$W = 2 \times 1$.} The Rice factor is fixed to $K=-10$ dB, the number of scattered paths is set to $N_p=5$, and the transmit SNR is fixed to $15$ dB. Moreover, two different port-selection settings are considered in this figure: $P=4$, represented by solid lines, and $P=P_{\mathrm{eff}}=14$, represented by dotted lines. Notice from all curves that, for the case $P=4$, FAHM-GEPort achieves the lowest OP over the whole threshold range, followed by FAHM-DC, while CUMA and slow-FAMA exhibit substantially worse reliability. A similar trend is observed for the case $P_{\mathrm{eff}}=14$, where the dotted curves show an improvement with respect to their solid-line counterparts, showcasing the performance gain associated to selecting a number of ports under the effective port criterion.

In Fig.~\ref{fig9}, we present the OP as a function of the normalized selected-port ratio $P/P_{\mathrm{eff}}$. In this scenario, we consider a multiuser 2D FA scenario with $M=U=8$, $N = 7 \times 5$, and normalized aperture {$W = 2 \times 1$.} The Rice factor is fixed to $K=-10$ dB, the number of scattered paths is set to $N_p=20$,  $\text{SNR}=15$ dB, and the outage threshold is fixed to $\gamma=7$ bps/Hz. From the curves, it is observed that the
normalized ratio also provides a clear interpretation of the different operating regions. When $P/P_{\mathrm{eff}}<1$, the receiver employs fewer ports than effectively required by the channel. In this regime, both FAHM-GEPort and FAHM-DC operate under a port-deficient condition, but FAHM-GEPort already exhibits a much sharper OP reduction than FAHM-DC. On the other hand, around $P/P_{\mathrm{eff}} \approx 1$, the FAHM-GEPort approaches the critical operating point where the number of selected ports becomes comparable to the effective channel dimensionality{\footnote{{Although $P_{\mathrm{eff}}$ is not the algebraic rank of the channel matrix, it can be interpreted as a port-domain effective dimensionality measure. Thus, $P/P_{\mathrm{eff}}\approx 1$ indicates that the selected ports are sufficient to capture the dominant effective channel subspace, while additional ports mainly provide redundant spatial information.}}}. This is precisely the region where the largest performance transition is observed. FAHM-GEPort experiences a very steep outage reduction and reaches a near-floor behavior around this normalized operating point. Finally, for $P/P_{\mathrm{eff}}>1$, the selected-port dimension exceeds the effective number of ports needed by the channel, and the improvement begins to saturate, as expected. 

Finally, Table~\ref{tab:complexity} summarizes the computational complexity and the average elapsed time of the considered schemes. The simulation parameters are the same as those adopted in Fig.~\ref{fig6}(c). The reported execution times correspond to the average runtime per realization, measured over the whole simulated SNR range\footnote{The computational cost of all considered methods was evaluated in MATLAB R2022b on a system running Windows 11 Pro (64-bit), equipped with an Intel\textregistered~Core\texttrademark~i9-13900H CPU at 2.60~GHz and 16~GB of RAM.}. As can be observed, classical GEPort scheme as proposed in \cite{GoLo26} entails a larger computational cost, both in terms of asymptotic complexity and measured runtime, due to the repeated matrix operations involved in the generalized eigenvector-based selection process. Nevertheless, GEPort delivers substantial performance gains, which makes it a suitable high-performance benchmark. In addition, the newly proposed GEPort + Rank-1 implementation significantly alleviates the computational burden by reducing the complexity from $\mathcal{O}((N-P)N^3)$ to $\mathcal{O}((N-P)N^2)$. This reduction is also reflected in the average elapsed time, which decreases from $309.70$ ms to $119.16$ ms, corresponding to a runtime reduction of approximately $61.5\%$.

\section{Conclusion}
 \label{conclusiones}
In this work, a novel hybrid multiport receiver for slow fluid antenna multiple access was proposed. The resulting FAHM architecture was shown to preserve the main benefits of multiport reception while requiring only a small number of RF chains. By introducing an equivalent combiner over the selected-port subspace, the receiver design was reformulated in a hybrid analog-digital form, for which only two RF chains are sufficient in the considered single-stream slow-FAMA setting. To formally determine the dimension of the selected-port set, an effective-port criterion was introduced to identify the minimum number of selected ports required to preserve the dominant channel contribution before hybrid combining. Then, an efficient version of the GEPort method was adopted, using a reduced-complexity implementation that leverages the rank-1 structure of the useful-signal matrix. The proposed FAHM formulation provides a unifying framework for multiport reception, with CUMA arising as a particular structured instance. Numerical results showed that FAHM-GEPort consistently outperforms the considered benchmarks in terms of spectral efficiency and outage probability, while $P_{\mathrm{eff}}$ was shown to be an accurate design criterion to achieve hybrid multiport gains.

\appendices
\section{GEPort Complexity Reduction} \label{apA}
The Generalized Eigenvector Port Selection method proposed in \cite{GoLo26} presents a non-negligible computational cost, which is cubic in the number of selected ports $\mathcal{O}(N^3)$. Moreover, the iterative nature of the procedure leads to computational times that are severely affected by the costs of each step. To alleviate this computational burden, we first note that the approach in \cite{GoLo26} computes the generalized eigen-decomposition of the matrix pair $(\mathbf{A}, \mathbf{B})$. Now, since $\mathbf{A}$ is a rank-$1$ matrix by its definition, the dominant eigenvector can be obtained without the explicit decomposition. For some user $u$, the matrix is defined as $\mathbf A_u=\mathbf{H}_u\mathbf{p}_u\mathbf{p}_u^H\mathbf{H}_u^H$. In addition, any generalized eigenvalue $\mathbf v$ satisfies the equation $\mathbf A_u\mathbf v=\lambda\mathbf{B}_u\mathbf v$. By introducing the equivalent channel $\mathbf{h}_u=\mathbf{H}_u\mathbf{p}_u$ and substituting in the latter expression, we get that
\begin{equation}
        \mathbf{h}_u\mathbf{h}_u^H\mathbf v=\lambda\mathbf{B}_u\mathbf v.
        \label{eq:eigenRank1}
\end{equation}
In the following, we remove the subindex $u$ for ease of exposition. Now, since $\mathbf{B}$ is a regular matrix, we can rearrange the terms taking advantage of the definition of $\mathbf A$, so that 
\begin{equation}
        \mathbf{B}^{-1}\mathbf{h}\mu=\mathbf v,
\end{equation}
where the scalar $\mu=\lambda^{-1}\mathbf{h}^H\mathbf v$ is unknown, but it does not affect the equality in  \eqref{eq:eigenRank1}. Then, the dominant eigenvalue can be obtained without performing the decomposition as $\mathbf v=\mathbf{B}^{-1}\mathbf{h}$. Still, the computation of the inverse of $\mathbf{B}^{-1}$ is costly and has to be recomputed at each iteration, due to the removal of the row and the column from $\mathbf{B}$ associated with the discarded port. We can further improve the efficiency of the method by dealing with this matrix operation. First, notice that the permutation matrix $\mathbf P_{(i)}$, that exchanges the position of the $i$-th row and the first row of a matrix, is orthogonal, $\mathbf{P}^{-1}_{(i)} = \mathbf{P}_{(i)}^T$, and therefore
\begin{align}
\mathbf{B}^{-1}_{(i)}
=(\mathbf{P}_{(i)} \mathbf{B} \mathbf{P}^T_{(i)})^{-1}
=\mathbf{P}_{(i)} \mathbf{B}^{-1} \mathbf{P}^T_{(i)},
\end{align}
where we denote as $\mathbf{B}^{-1}_{(i)}$ the matrix which results from permuting the $i$-th row and the $i$-th column with the first row and column of $\mathbf{B}^{-1}$, respectively, so that the permutation and the inverse commute. Hence, we can focus on dropping the first row and column of $\mathbf{B}^{-1}$ without loss of generality. Let us now introduce the block decomposition of $\mathbf B$ as
\begin{equation}
\mathbf{B} =
\begin{pmatrix}
b_{11} & \mathbf{b}^T \\
\mathbf{b} & \tilde{\mathbf{B}}
\end{pmatrix},
\end{equation}
and let $\nu = b_{11} - \mathbf{b}^T \tilde{\mathbf{B}}^{-1} \mathbf{b}$
denote the Schur complement of $\tilde{\mathbf{B}}$ in $\mathbf{B}$. Using the block matrix inversion formula, one obtains
\begin{align}
\mathbf{B}^{-1}
&=\begin{pmatrix}
\nu^{-1} & -\mathbf{c}^T \\
-\mathbf{c} &
\mathbf{C}
\end{pmatrix},
\end{align}
where $\mathbf c=\tilde{\mathbf{B}}^{-1}\mathbf{b}\nu^{-1}$ and $\mathbf{C}=\tilde{\mathbf{B}}^{-1}+\tilde{\mathbf{B}}^{-1}\mathbf{b}\nu^{-1}\mathbf{b}^T\tilde{\mathbf{B}}^{-1}$
Finally, the desired inverse after removing the first row and column of $\mathbf{B}$ is readily obtained as
\begin{equation}
\tilde{\mathbf{B}}^{-1} =\mathbf{C}-\nu\mathbf{c}\mathbf{c}^T.
\end{equation}
By using the previous expression, the computational cost of successive iterations is reduced by one order of magnitude, yielding a complexity of $\mathcal{O}(N^2)$ per iteration.

\section{Effective-Port Criterion}
\label{apB}
Let $\mathbf{v}\in\mathbb{C}^{N}$ denote the dominant generalized eigenvector associated with the current GEPort iteration, where each entry $v_i$ represents the complex weight assigned to the $i$-th port after combining. The squared magnitude $|v_i|^2$ can be interpreted as the contribution (or energy) of port $i$ to the overall combined signal. We define the normalized per-port energy weights as
\begin{equation}
    \pi_i = \frac{|v_i|^2}{\sum_{j=1}^{N}|v_j|^2}, \qquad i=1,\ldots,N,
\end{equation}
so that $\pi_i \ge 0$ and $\sum_{i=1}^{N}\pi_i = 1$. The effective number of selected ports is defined as the inverse participation ratio of the energy distribution $\{\pi_i\}_{i=1}^{N}$, namely,
\begin{equation}
    P_{\mathrm{eff}} = \frac{1}{\sum_{i=1}^{N}\pi_i^2}.
\end{equation}
Substituting the definition of $\pi_i$ into the above expression yields
\begin{equation}
    P_{\mathrm{eff}}
    =
    \frac{1}{\sum_{i=1}^{N}
    \left(
    \frac{|v_i|^2}{\sum_{j=1}^{N}|v_j|^2}
    \right)^2 }.
\end{equation}
After straightforward simplification, we obtain the following
\begin{equation}
    P_{\mathrm{eff}}
=\frac{\left(\sum_{i=1}^{N}|v_i|^2\right)^2}
         {\sum_{i=1}^{N}|v_i|^4}.
\end{equation}
Finally, if $\mathbf{v}$ is normalized so that $\|\mathbf{v}\|_2=1$, then
\begin{equation}
    \sum_{i=1}^{N}|v_i|^2 = 1,
\end{equation}
and therefore the previous expression reduces to
\begin{equation}
    P_{\mathrm{eff}} = \frac{1}{\sum_{i=1}^{N}|v_i|^4}.
\end{equation}

\section{CUMA as a special case of FAHM}
\label{apC}
In this section, we show how the {reference CUMA} method in \cite{CUMA} can be seen as a particular case of the proposed structure. Recall that the detected symbol $\hat{z}_u = \mathbf{w}_u^H \mathbf{F}_u^H \mathbf{S}_u^T \mathbf{x}_u$ is obtained with a selection matrix $\mathbf{S}_u$ and the matrix $\mathbf{F}_u$ representing the analog network of phase-shifter. To mimic the CUMA strategy, we first set the selection matrix $\mathbf{S}_u$ including all the ports for the in-phase and quadrature components of the channel, denoted as $\mathcal{P}_I$ and $\mathcal{P}_Q$. Next, we employ a matrix $\mathbf{F}_u\in\mathbb{C}^{P\times 4}$ defined as follows 
\begin{align}
     [\mathbf{F}_u]_{p,1}=\begin{cases}e^{j\phi_p}\cos(\phi_p) & \text{if $p \in \mathcal{P}_I$} \\
    0 & \text{otherwise}
    \end{cases}\\
    [\mathbf{F}_u]_{p,2}=\begin{cases}e^{j\phi_p}\sin(\phi_p) & \text{if $p \in \mathcal{P}_I$} \\
    0 & \text{otherwise}
    \end{cases}
\end{align}
where $[\mathbf{x}_u]_p=\rho_pe^{j\phi_p}$. This strategy is replicated for rows $3$ and $4$ of $\mathbf{F}_u$, associated with the ports selected for the quadrature component of the channel, $\mathcal{P}_Q$. Similar to \cite{CUMA}, we define the aggregated signal as 
\begin{equation}
\mathbf r_u=\mathbf{F}_u^H\mathbf S_u^T\mathbf{x}_u
\end{equation}
where $\mathbf r_u=[r_{u,1}^{\mathrm{I}},
r_{u,2}^{\mathrm{I}},
r_{u,1}^{\mathrm{Q}},
r_{u,2}^{\mathrm{Q}}]^T$. To recover the two components, CUMA solves a system of equations. This can be replicated under our approach by calculating $\hat{z}_u = \mathbf{w}_u^H \mathbf{r}_u$,
where the vector $\mathbf{w}_u^H$ is given by
\begin{equation}
    \mathbf{w}_u^H=\begin{bmatrix}1 & j\end{bmatrix}\begin{bmatrix}
\sum_{p \in \mathcal{P}_I} \Re\!\left([\mathbf{h}_u]_p\right) &
-\sum_{p \in \mathcal{P}_I} \Im\!\left([\mathbf{h}_u]_p\right) \\
\sum_{p \in \mathcal{P}_I} \Im\!\left([\mathbf{h}_u]_p\right) &
\sum_{p \in \mathcal{P}_I} \Re\!\left([\mathbf{h}_u]_p\right) \\
\sum_{p \in \mathcal{P}_Q} \Re\!\left([\mathbf{h}_u]_p\right) &
-\sum_{p \in \mathcal{P}_Q} \Im\!\left([\mathbf{h}_u]_p\right) \\
\sum_{p \in \mathcal{P}_Q} \Im\!\left([\mathbf{h}_u]_p\right) &
\sum_{p \in \mathcal{P}_Q} \Re\!\left([\mathbf{h}_u]_p\right)
\end{bmatrix}^{\dagger},
\label{eqWcuma}
\end{equation}
$\mathbf h_u = \mathbf{H}_u \mathbf{p}_u$, and $\dagger$ denotes the pseudo-inverse operation.

\bibliographystyle{IEEEtran}

\bibliography{References}

\begin{thebibliography}{10}
\providecommand{\url}[1]{#1}
\csname url@samestyle\endcsname
\providecommand{\newblock}{\relax}
\providecommand{\bibinfo}[2]{#2}
\providecommand{\BIBentrySTDinterwordspacing}{\spaceskip=0pt\relax}
\providecommand{\BIBentryALTinterwordstretchfactor}{4}
\providecommand{\BIBentryALTinterwordspacing}{\spaceskip=\fontdimen2\font plus
\BIBentryALTinterwordstretchfactor\fontdimen3\font minus
  \fontdimen4\font\relax}
\providecommand{\BIBforeignlanguage}[2]{{%
\expandafter\ifx\csname l@#1\endcsname\relax
\typeout{** WARNING: IEEEtran.bst: No hyphenation pattern has been}%
\typeout{** loaded for the language `#1'. Using the pattern for}%
\typeout{** the default language instead.}%
\else
\language=\csname l@#1\endcsname
\fi
#2}}
\providecommand{\BIBdecl}{\relax}
\BIBdecl

\bibitem{Clerckx2024MA6G}
B.~Clerckx, Y.~Mao, Z.~Yang, M.~Chen, A.~Alkhateeb, L.~Liu, M.~Qiu, J.~Yuan,
  V.~W.~S. Wong, and J.~Montojo, ``{Multiple Access Techniques for Intelligent
  and Multifunctional 6G: Tutorial, Survey, and Outlook},'' \emph{Proc. IEEE},
  vol. 112, no.~7, pp. 832--879, 2024.

\bibitem{Wang2024XLMIMO}
Z.~Wang, J.~Zhang, H.~Du, W.~E.~I. Sha, B.~Ai, D.~Niyato, and M.~Debbah,
  ``{Extremely Large-Scale MIMO: Fundamentals, Challenges, Solutions, and
  Future Directions},'' \emph{IEEE Wireless Commun.}, vol.~31, no.~3, pp.
  117--124, 2024.

\bibitem{Wong2021}
K.-K. Wong, A.~Shojaeifard, K.-F. Tong, and Y.~Zhang, ``Fluid antenna
  systems,'' \emph{IEEE Trans. Wireless Commun.}, vol.~20, no.~3, pp.
  1950--1962, 2021.

\bibitem{BruceLee2020}
\BIBentryALTinterwordspacing
K.-K. Wong, K.-F. Tong, Y.~Zhang, and Z.~Zhongbin, ``{Fluid Antenna System for
  6G: When Bruce Lee Inspires Wireless Communications},'' \emph{Electron.
  Lett.}, vol.~56, no.~24, pp. 1288--1290, 2020. [Online]. Available:
  \url{https://ietresearch.onlinelibrary.wiley.com/doi/abs/10.1049/el.2020.2788}
\BIBentrySTDinterwordspacing

\bibitem{Lu2025FluidAntennas}
W.-J. Lu, C.-X. He, Y.~Zhu, K.-F. Tong, K.-K. Wong, H.~Shin, and T.-J. Cui,
  ``Fluid antennas: Reshaping intrinsic properties for flexible radiation
  characteristics in intelligent wireless networks,'' \emph{IEEE Commun. Mag.},
  vol.~63, no.~5, pp. 40--45, 2025.

\bibitem{Hong2026Survey}
H.~Hong, K.-K. Wong, C.-B. Chae, H.~Xu, X.~Guo, F.~Rostami~Ghadi, Y.~Chen,
  Y.~Xu, B.~Liu, K.-F. Tong, and Y.~Zhang, ``A contemporary survey on fluid
  antenna systems: Fundamentals and networking perspectives,'' \emph{IEEE
  Trans. Netw. Sci. Eng.}, vol.~13, pp. 2305--2328, 2026.

\bibitem{FAMA}
K.-K. Wong and K.-F. Tong, ``{Fluid Antenna Multiple Access},'' \emph{IEEE
  Trans. Wireless Commun.}, vol.~21, no.~7, pp. 4801--4815, 2022.

\bibitem{FastFAMA}
K.-K. Wong, K.-F. Tong, Y.~Chen, and Y.~Zhang, ``Fast fluid antenna multiple
  access enabling massive connectivity,'' \emph{IEEE Commun. Lett.}, vol.~27,
  no.~2, pp. 711--715, 2023.

\bibitem{SlowFAMA}
K.-K. Wong, D.~Morales-Jimenez, K.-F. Tong, and C.-B. Chae, ``{Slow Fluid
  Antenna Multiple Access},'' \emph{IEEE Trans. Commun.}, vol.~71, no.~5, pp.
  2831--2846, 2023.

\bibitem{Xu2024Outage}
H.~Xu, K.-K. Wong, W.~K. New, K.-F. Tong, Y.~Zhang, and C.-B. Chae,
  ``Revisiting outage probability analysis for two-user fluid antenna multiple
  access system,'' \emph{IEEE Trans. Wireless Commun.}, vol.~23, no.~8, pp.
  9534--9548, 2024.

\bibitem{waqar2023dlslow}
N.~Waqar, K.-K. Wong, K.-F. Tong, A.~Sharples, and Y.~Zhang, ``Deep learning
  enabled slow fluid antenna multiple access,'' \emph{IEEE Commun. Lett.},
  vol.~27, no.~3, pp. 861--865, 2023.

\bibitem{Hong2025BlockCodedFAMA}
H.~Hong, K.-K. Wong, K.-F. Tong, H.~Xu, and H.~Li, ``{5G-Coded Fluid Antenna
  Multiple Access over Block Fading Channels},'' \emph{Electron. Lett.},
  vol.~61, no.~1, p. e70166, 2025.

\bibitem{Hong2025FastCodedFAMA}
H.~Hong, K.-K. Wong, K.-F. Tong, H.~Shin, and Y.~Zhang, ``Coded fluid antenna
  multiple access over fast fading channels,'' \emph{IEEE Wireless Commun.
  Lett.}, vol.~14, no.~4, pp. 1249--1253, 2025.

\bibitem{Waqar2025TurboFAMA}
N.~Waqar, K.-K. Wong, C.-B. Chae, and R.~Murch, ``Turbocharging fluid antenna
  multiple access,'' \emph{IEEE Trans. Wireless Commun.}, 2025, early Access.

\bibitem{CUMA}
K.-K. Wong, C.-B. Chae, and K.-F. Tong, ``Compact ultra massive antenna array:
  A simple open-loop massive connectivity scheme,'' \emph{IEEE Trans. Wireless
  Commun.}, vol.~23, no.~6, pp. 6279--6294, 2024.

\bibitem{CUMA2}
K.-K. Wong, ``{Compact Ultra Massive Array (CUMA) with 4 RF Chains for Massive
  Connectivity},'' in \emph{Proc. IEEE Workshop Signal Process. Adv. Wirel.
  Commun. (SPAWC)}, 2024, pp. 286--290.

\bibitem{Hong25a}
H.~Hong, K.-K. Wong, H.~Xu, Y.~Xu, H.~Shin, R.~Murch, D.~He, and W.~Zhang,
  ``{Downlink OFDM-FAMA in 5G-NR Systems},'' \emph{IEEE Trans. Wireless
  Commun.}, vol.~24, no.~12, pp. 10\,116--10\,132, 2025.

\bibitem{GoLo26}
J.~P. González-Coma and F.~J. López-Martínez, ``{Slow Fluid Antenna Multiple
  Access With Multiport Receivers},'' \emph{IEEE Wireless Commun. Lett.},
  vol.~15, pp. 1280--1284, 2026.

\bibitem{hong2025multiport}
\BIBentryALTinterwordspacing
H.~Hong, K.-K. Wong, X.~Zhu, H.~Xu, H.~Xiao, F.~Rostami~Ghadi, and H.~Shin,
  ``{Multi-Port Selection for FAMA: Massive Connectivity with Fewer RF Chains
  than Users},'' \emph{arXiv preprint arXiv:2511.17897}, 2025. [Online].
  Available: \url{https://arxiv.org/abs/2511.17897}
\BIBentrySTDinterwordspacing

\bibitem{perezadan2026greedy}
\BIBentryALTinterwordspacing
D.~Perez-Adan, J.~P. Gonzalez-Coma, F.~J. L\'opez-Mart\'inez, and L.~Castedo,
  ``Greedy and transformer-based multi-port selection for slow fluid antenna
  multiple access,'' \emph{arXiv preprint arXiv:2604.04589}, 2026. [Online].
  Available: \url{https://arxiv.org/abs/2604.04589}
\BIBentrySTDinterwordspacing

\bibitem{ZhMoKu05}
X.~Zhang, A.~F. Molisch, and S.-Y. Kung, ``{Variable-phase-shift-based
  RF-baseband codesign for MIMO antenna selection},'' \emph{IEEE Trans. Signal
  Process.}, vol.~53, no.~11, pp. 4091--4103, November 2005.

\bibitem{Rappaport2015Wideband}
T.~S. Rappaport, G.~R. MacCartney, M.~K. Samimi, and S.~Sun, ``Wideband
  millimeter-wave propagation measurements and channel models for future
  wireless communication system design,'' \emph{IEEE Trans. Commun.}, vol.~63,
  no.~9, pp. 3029--3056, 2015.

\bibitem{Nossek10}
M.~T. Ivrlač and J.~A. Nossek, ``{Toward a Circuit Theory of Communication},''
  \emph{IEEE Trans. Circuits Syst. I Regul. Pap.}, vol.~57, no.~7, pp.
  1663--1683, 2010.

\bibitem{Zhu2024}
L.~Zhu, W.~Ma, B.~Ning, and R.~Zhang, ``Movable-antenna enhanced multiuser
  communication via antenna position optimization,'' \emph{IEEE Trans. Wireless
  Commun.}, vol.~23, no.~7, pp. 7214--7229, 2024.

\bibitem{Shen2024}
Y.~Shen \emph{et~al.}, ``Design and implementation of mmwave surface wave
  enabled fluid antennas and experimental results for fluid antenna multiple
  access,'' \emph{arXiv e-prints, arXiv:2405.09663 [eess]}, 2024.

\bibitem{Zhang2024}
J.~Zhang \emph{et~al.}, ``A novel pixel-based reconfigurable antenna applied in
  fluid antenna systems with high switching speed,'' \emph{IEEE Open J.
  Antennas Propag.}, vol.~6, no.~1, pp. 212--228, 2025.

\bibitem{Zhang2025}
J.~Zhang, J.~Rao, T.~Kang, Z.~Ming, Y.~Chen, A.~Umirbayev, C.-Y. Chiu, and
  R.~Murch, ``Scalable pixel-based reconfigurable beamforming networks for
  designing fluid antenna systems,'' \emph{arXiv preprints, arXiv:2512.03703
  [eess.SP]}, 2025.

\bibitem{Ramirez2025}
P.~Ramírez-Espinosa, C.~Segura-Gómez, Ángel Palomares-Caballero, F.~J.
  López-Martínez, and D.~Morales-Jiménez, ``Metasurface-based fluid
  antennas: from electromagnetics to communications model,'' \emph{arXiv
  e-prints, arXiv:2507.17982 [eess.SP]}, 2025.

\bibitem{Liu2025}
B.~Liu, K.-F. Tong, K.-K. Wong, C.-B. Chae, and H.~Wong, ``Programmable
  meta-fluid antenna for spatial multiplexing in fast fluctuating radio
  channels,'' \emph{Opt. Express}, vol.~33, no.~13, pp. 28\,898--28\,915, Jun
  2025.

\bibitem{Dinis26}
D.~Dinis and R.~Wichman, ``{s-FAMA-GP: A Low-Complexity Slow FAMA Using
  Interference Interpolation},'' \emph{IEEE Wireless Commun. Lett.}, vol.~15,
  pp. 1727--1731, 2026.

\end{thebibliography}
\end{document}